\newcommand{\orcid}[1]{\href{https://orcid.org/#1}{\includegraphics[width=7pt]{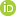}}}
\newcommand{\mc}{\mathcal}
\newcommand{\be}{\begin{equation}}
\newcommand{\ee}{\end{equation}}
\newcommand{\mbb}{\mathbbm}
\newtheorem{prop}{Proposition}
\renewcommand\bra[1]{{\langle{#1}|}}
\renewcommand\ket[1]{%
  \@ifnextchar\bra{\k@t{#1}\!}{\k@t{#1}}%
}
\newcommand\k@t[1]{{|{#1}\rangle}}
\begin{document}

\title{High-dimensional monitoring and the emergence of realism via multiple observers}

\author{Alexandre C. Orthey Jr. \orcid{0000-0001-8111-3944}}
\affiliation{Center for Theoretical Physics, Polish Academy of Sciences, Al. Lotnik\'ow 32/46, 02-668 Warsaw, Poland.}
\affiliation{Institute of Fundamental Technological Research, Polish Academy of Sciences, Pawi\'nskiego 5B, 02-106 Warsaw, Poland.}

\author{Pedro R. Dieguez \orcid{0000-0002-8286-2645}}
\affiliation{International Centre for Theory of Quantum Technologies, University of Gda\'nsk, Jana Bazynskiego 8, 80-309 Gda\'nsk, Poland.}

\author{Owidiusz Makuta \orcid{0000-0002-0070-8709}}
\affiliation{Center for Theoretical Physics, Polish Academy of Sciences, Al. Lotnik\'ow 32/46, 02-668 Warsaw, Poland.}
\affiliation{Instituut-Lorentz, Universiteit Leiden, P.O. Box 9506, 2300 RA Leiden, The Netherlands}
\affiliation{$\langle \text{aQa}^\text{L} \rangle$ Applied Quantum Algorithms Leiden, The Netherlands}

\author{Remigiusz Augusiak \orcid{0000-0003-1154-6132}}
\affiliation{Center for Theoretical Physics, Polish Academy of Sciences, Al. Lotnik\'ow 32/46, 02-668 Warsaw, Poland.}

\begin{abstract}
Correlation is the basic mechanism of every measurement model, as one never accesses the measured system directly. Instead, correlations are created, codifying information about the measurable property into the environment. Here, we address the problem of the emergence of physical reality from the quantum world by introducing a model that interpolates between weak and strong non-selective measurements for qudits. By utilizing Heisenberg-Weyl operators, our model suggests that independently of the interaction intensity between the system and the environment, full information about the observable of interest can always be obtained by making the system interact with many environmental qudits, following a Quantum Darwinism (QD) framework.
\end{abstract}

\maketitle

\section{Introduction}

Quantum theory gives a prominent role to the notion of measurements. They are the basic ingredients in several quantum technologies such as measurement-based quantum computation~\cite{briegel2009measurement,raussendorf2016symmetry}, thermal devices fueled by measurements~\cite{behzadi2020quantum,buffoni2019quantum,bresque2021two,lisboa2022experimental,dieguez2023thermal}, measurement-based quantum communication~\cite{zwerger2016measurement}, as well as in foundational discussions regarding the measurement problem~\cite{schlosshauer2005decoherence,dieguez2018information} and on the understanding of the quantum-to-classical transition~\cite{zurek2003decoherence,zurek2009quantum}. In particular, the emergence of objective reality has been investigated with the framework of QD~\cite{zurek2009quantum,touil2022eavesdropping} through the process of redundancy, where multiple copies of information about the quantum system are created in its environment, and from the closely related Spectrum Broadcast Structure~\cite{korbicz2014objectivity,horodecki2015quantum,korbicz2021roads}. 

Generalized measurements that can interpolate between weak and strong (projective) non-selective regimes were employed to investigate the role of measurements in the emergence of realism from the quantum substratum~\cite{dieguez2018information}, as quantified by the informational measure known as \textit{quantum irrealism}~\cite{bilobran2015measure,dieguez2018information,mancino2018information}.
The quantum irrealism measure is based on the contextual realism hypothesis introduced in~\cite{bilobran2015measure} which generalizes the notion of EPR elements of reality~\cite{einstein1935can} by stating that for quantum systems, a measured property becomes well-defined after a projective measurement of some discrete spectrum observable, \textit{even when one does not have access to the specific measurement result}~\cite{bilobran2015measure,dieguez2018information,orthey2022quantum}. In other words, incoherent mixtures of all possible outcomes have realism for the measured observable. Realism was investigated employing monitoring with continuous variable measurement systems~\cite{dieguez2018information}, which showed to have a complementary relation with the available information of a quantum system~\cite{dieguez2018information}. The information-realism complementarity suggests that the establishment of realism for some observable is grounded on the encoding of information about it.

\begin{figure}[t]
    \centering
    \includegraphics[width=0.7\linewidth]{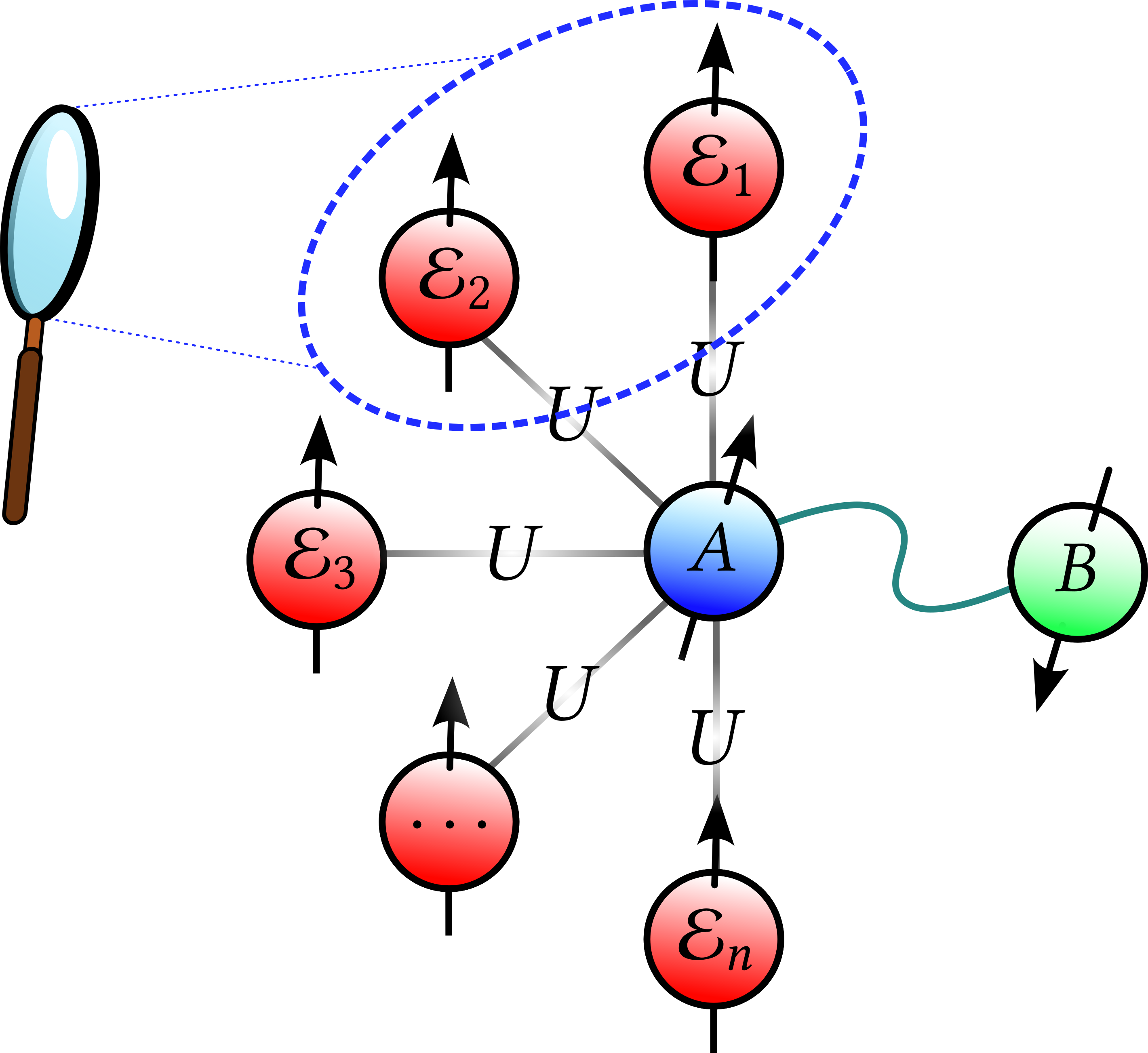}
    \caption{A bipartite system $\rho_\mc{AB}$ with arbitrary dimension is monitored by a collection of environmental subsystems $\{\mc{E}_i\}_{i=1}^n$, with each interaction represented by a unitary evolution acting on the joint system. The interaction with the environment establishes not just the realism associated with the context $\{A,\rho_\mc{AB}\}$, but also the proliferation of redundant information about the system in small portions of the environment; a key process in QD. This proliferation of redundant information is associated with the emergence of objective reality as several observers will agree on their outcomes after accessing the information about the system that is encoded in the environment.}
    \label{fig1}
\end{figure}

Here, we investigate the emergence of realism via monitoring modeled by discrete quantum systems with higher dimensions.  We identify that a large qudit-environmental system is sufficient for the establishment of realism for all the range of measurement strengths, corroborating the results obtained with continuous variable monitoring~\cite{dieguez2018information}. Moreover, we introduce a consistent interpolation model for weak to strong measurements within high-dimensional systems. First, we discuss the qubit regime and address its limitations and direct generalizations. To address the qudit regime, we model the interaction between the system and the environment through \textit{generalized observables}~\cite{kaniewski2019maximal}, described by the Fourier transform of POVMs. Our model allows for the control of the disturbance on the measured quantum system, therefore, allowing an interpolation from a weak to strong projective action~\cite{oreshkov2005weak,dieguez2018information}. This interpolation regime was experimentally investigated for qubits employing a trapped-ion platform~\cite{pan2020weak}, and with photonic weak measurements to investigate the information-realism complementarity~\cite{mancino2018information}. In the noiseless regime, our model reproduces the ``perfect record'' case from Zurek's work \cite{zurek2009quantum}. In Fig. \ref{fig1} we depict the scenario we are modeling.

The paper is organized as follows. In Sec. \ref{irrealism_sec} we review the concept of irrealism of observables introduced in \cite{bilobran2015measure}. In Sec. \ref{info_realism_app} we revisit the information-realism complementarity introduced by \cite{dieguez2018information}. In Sec. \ref{qubit_sec} we show how our problem can be solved for qubit systems. In Sec. \ref{generalized_sec} we review generalized observables in order to expand the solution for qubits to the qudit case. In Sec. \ref{results} we present the main results and final considerations are left for Sec. \ref{conclusion}.

\section{Preliminaries}

\subsection{Irrealism of observables}\label{irrealism_sec}

The quantum realism argument can be formalized as follows.  Let $A=\sum_a aA_a$ be a $d$-output observable acting on $\mc{H_A}$, where $A_a=\ket{a}\bra{a}$ are projectors satisfying $\sum_a A_a=\mbb{1}_\mc{A}$. Also, let 
\be
\Phi_A(\varrho)=\sum_a (A_a\otimes \mbb{1}_\mc{B})\varrho(A_a\otimes \mbb{1}_\mc{B})
\ee
be the non-selective projective measurement of the observable $A$. Let us consider a bipartite system $\rho_\mc{AB}$ acting over $\mc{H}_\mc{AB}$ and an observable $A$ acting on $\mc{H_A}$. We say that $\rho_\mc{AB}$ has a realism defined for $A$ \textit{iff}
\be\label{reality}
\Phi_A(\rho_\mc{AB})=\rho_\mc{AB}.
\ee
States that already have realism defined for some observable are invariant to a non-selective (or non-revealed) projective measurement of the same observable, i.e. $\Phi_A(\Phi_A(\rho))=\Phi_A(\rho)$. Subsequently, the \textit{irreality} of $A$ given $\rho$
\be\label{irreality}
\mathfrak{I}_A(\rho)\coloneqq \min_{\varrho} S\big(\rho||\Phi_A(\varrho)\big)=S\big(\Phi_A(\rho)\big)-S(\rho)
\ee
proved to be a faithful quantifier of $A$-realism violations for a given state $\rho$~\cite{bilobran2015measure,dieguez2018information,orthey2022quantum}, where $S(\rho||\sigma)=\Tr\big[\rho(\log_2{\rho}-\log_2{\sigma})\big]$ stands for the relative entropy and $S(\rho)\coloneqq-\text{Tr}\big(\rho\log_2{\rho}\big)$ is the von Neumann entropy. In addition, we can decompose irreality as local coherence plus non-optimized quantum discord, i.e. 
\be
\mathfrak{I}_A(\rho)=\mc{C}(\rho_\mc{A})+D_A(\rho)
\ee
where $\mc{C}(\rho_\mc{A})=\mathfrak{I}_A(\Tr_\mc{B}\rho)$, $D_A(\rho)=I_\mc{A:B}(\rho)-I_\mc{A:B}(\Phi_A(\rho))$, and $I_\mc{A:B}=S(\rho_\mc{A})+S(\rho_\mc{B})-S(\rho_\mc{AB})$ is the quantum mutual information. Consequently, correlations between parties $\mc{A}$ and $\mc{B}$ prevent the existence of elements of physical reality for observables on both parties, that is, any positive value for $D_{A(B)}(\rho)$ implies non-null irrealism $\mathfrak{I}_{A(B)}(\rho)$.

The irrealism measure was theoretically employed to study a series of foundational problems~\cite{gomes2018nonanomalous,Rudnicki2018uncertainty,orthey2019nonlocality,gomes2019resilience,fucci2019tripartite,freire2019quantifying,costa2020information,lustosa2020irrealism,orthey2022quantum,gomes2022realism,Engelbert2023considerations,paiva2023coherence,Caetano2024quantum,orthey2024geometric}. Moreover, some experimental reports include: a nuclear magnetic resonance experiment to probe the robustness of the wave-particle duality in a quantum-controlled interferometer~\cite{dieguez2022experimental}, photonic~\cite{mancino2018information} and superconducting qubits~\cite{basso2022reality} to investigate the emergence of realism upon \textit{monitoring maps}~\cite{dieguez2018information, dieguez2018weak}, and the role of distant operations in the erasure of physical reality in an optical quantum eraser experiment \cite{Araujo2024quantum}. Monitoring maps \cite{dieguez2018information} are defined as
\be\label{monitoring}
\mc{M}_A^\epsilon(\varrho)\coloneqq (1-\epsilon)\rho+\epsilon\Phi_A(\varrho),
\ee
that is, they describe an interaction that produces as an output the interpolation between weak and strong non-selective measurement regimes for the measured system. Noteworthy, \cite{Molitor2024Nov} analyzes the implementation of monitoring maps for incompatible observables in superpositions of causal orders, employing a quantum switch influenced by an environment modeled through a collisional approach.

\subsection{Information-realism complementarity}\label{info_realism_app}

Consider the amount of quantum accessible information of a generic quantum state $\rho$ in a Hilbert space $\cal{H_S}$ with dimension $d$ as
\begin{equation}
I(\rho)\coloneqq \ln{d}-S(\rho).
\end{equation}
Following Ref.~\cite{dieguez2018information}, wherein a complementarity relation between information and the degree of irrealism for some discrete observable was introduced, we can check that the subsequent application of $n$ monitoring maps over a state that is already a state of reality for $A$ does not change the state of affairs, that is
\begin{align}
[\cal{M}_A^{\epsilon}]^n\big(\Phi _{A}(\rho )\big)&
=(1-\epsilon )^{n}\Phi _{A}(\rho )+\big[1-(1-\epsilon )^{n}\big]\Phi
_{A}(\rho )  \nonumber \\
& =\Phi _{A}(\rho )=\Phi _{A}\big([\cal{M}_{A}^{\epsilon}]^n(\rho )\big).
\end{align}%
The above also proves the hierarchy of the map $\Phi _{A}$ over $[\cal{M}_A^{\epsilon}]^n$. Interestingly, this shows that the map $\cal{M}_{A}^{\epsilon}$ commutes with the map $\Phi_{A}$ for all intensities $\epsilon$. Employing entropy concavity  and the non-negativity of the irrealism measure $\mathfrak{I}_A(\rho)$, we can evaluate the difference in the irrealism given monitoring of the same observable
\begin{align}
\Delta \mathfrak{I}_A &\coloneqq \mathfrak{I}_A(\rho)- \mathfrak{I}_A(\cal{M}_{A}^{\epsilon}(\rho))\nonumber\\
&=S\left(\cal{M}_{A}^{\epsilon}(\rho )\right)-S(\rho)\geqslant \epsilon \mathfrak{I}_A(\rho),
\end{align}%
with equality for $\epsilon =0$. Introducing the amount of remaining accessible information after some monitoring of a generic observable $A$ as
\begin{equation}\label{info_realism}
\Delta I_{\cal{S}}\coloneqq I(\rho)-I(\cal{M}_{A}^{\epsilon}(\rho )),
\end{equation}
we see that
\be
\Delta I_{\cal{S}}=S\left(\cal{M}_{A}^{\epsilon}(\rho )\right)-S(\rho)=\Delta \mathfrak{I}_A
\ee
which means that the irrealism of the observable $A$ for the preparation $\rho$ quantifies the amount of remaining accessible information that still can be extracted after a weak non-selective measurement of $A$. It is important to note that either when $\epsilon=1$ or when $\lim_{n\to+\infty}[\cal{M}_A^{\epsilon}]^n$ for infinitely successive applications, $\mathfrak{I}_A(\rho)=\Delta I_{\cal{S}}$, which implies emergence of realism for this observable.

Moreover, one can prove that the information flow between the system $\cal{H_{S}}$ and the total environment $\cal{H_{E}}$ when we impose a global unitary dynamics that reproduces the monitoring of $A$ is
\be
\Delta\mathfrak{I}_A=\Delta( I_{\cal{S:E}}+I_{\cal E}), 
\ee
where $I_{\cal{S:E}}(\rho)=S(\rho_{\cal{S}})+S(\rho_{\cal{E}})-S(\rho)$ stands for the quantum mutual information. In other words, because the environment $\cal{E}$ gets information about $A$, this observable increases its realism degree. In the limit of strong non-selective measurements, irrealism goes to zero as well as the accessible information about this observable.

\subsection{Qubit case}\label{qubit_sec}

Let us describe the monitoring map from the point of view of the global interaction between the system and the environment. The simplest model describing a measurement procedure is constituted by a single CNOT gate. Suppose that our system of interest $\mc{S}$ is composed of only one arbitrary qubit in the state $\rho_\mc{S}$, but the environment $\mc{E}$ is known to be $\rho_\mc{E}=\ket{0}\bra{0}$. A CNOT gate~\cite{nielsen2010quantum}
\be
U_\textrm{CNOT}=\ket{0}\bra{0}\otimes\mbb{1}+\ket{1}\bra{1}\otimes\sigma_x
\ee
keeps the qubit in the environment in the same state if the state of the system is $\ket{0}$ and flips it if the state is $\ket{1}$. If the system lies in a superposition, such a superposition will be extended to the environment in the form of entanglement. Indeed, the global state $\Omega\coloneqq \rho_\mc{S}\otimes\ket{0}\bra{0}$ evolves to
\be\label{Omega_qubit}
\Omega'\coloneqq U\Omega U^\dagger=\sum_{i,j=0}^{1}\bra{i}\rho_\mc{S}\ket{j}\  \ket{i}\bra{j}\otimes\ket{i}\bra{j}.
\ee
If we discard the information in the environment by tracing it out, the resulting operation over the system is a non-selective measurement in the form of
\be
\Tr_\mc{E}\Omega'=\sum_{i\, =\, 0}^{1}\bra{i}\rho_\mc{S}\ket{i}\ \ket{i}\bra{i}=\Phi_{\sigma_z}(\rho_\mc{S}),
\ee
where $\sigma_z=\ket{0}\bra{0}-\ket{1}\bra{1}$. The map $\Phi_{\sigma_z}$ means that a measurement in the $z$ direction was realized, but the outcome was not recorded; a sufficient condition for the establishment of realism for observable $\sigma_z$. Since~\eqref{Omega_qubit} is symmetric, we have that $\Tr_\mc{S}\Omega'=\Tr_\mc{E}\Omega'$, and thus the probability $p_i=\Tr(\ket{i}\bra{i}\rho_\mc{S})$ is now encoded in the environment. In a practical situation, this is not always the case, as the CNOT gate cannot be perfectly implemented due to operational errors. Thus, only partial information regarding the observable can be retrieved; which, however, can be circumvented by interacting the system with more qubits in the environment to create redundant accessible information, following the QD framework.

One way to model a noisy CNOT gate for qubits was explored in~\cite{touil2022eavesdropping} by simply replacing $\sigma_x$ operator in $U_\textrm{CNOT}$ with the unitary $\sigma_{\theta}$, which is defined as a linear combination $\sigma_\theta\coloneqq\cos\theta\sigma_x+\sin\theta\sigma_z$. In this work, we also show that by controlling the interaction intensity between the system and the single environmental qubit through the parameter $\theta$, we can interpolate between weak and strong projective non-selective measurements. In detail, a noisy $U_\textrm{CNOT}$ with $\sigma_\theta$ results in a monitoring map $\mc{M}_A^\epsilon$ after tracing out the environment (see Appendix \ref{qubit_cases}). However, since this procedure only applies to qubit systems, it is likely to be not a good model for more complex systems. By restricting ourselves to only two dimensions we may miss certain effects which appear only for high-dimensional systems. Therefore, our motivation is to find such a model for arbitrary local dimensions. Consequently, we need to find a unitary operation acting on qudits that can be adjusted to interpolate between weak and strong non-selective measurement regimes.

\subsection{Generalized observables}\label{generalized_sec}

One way to generalize $U_\textrm{CNOT}$ for qudits (including the noisy regime) is to replace the Pauli operators with the Heisenberg-Weyl operators
\be
Z=\sum_{k\, =\, 0}^{d-1}\omega^k\ket{k}\bra{k},\qquad X=\sum_{k\, =\, 0}^{d-1}\ket{k+1}\bra{k},
\ee
where $\ket{d}\equiv\ket{0}$, $\omega=e^{2\pi\mbb{i}/d}$ is the complex root of unity, and $d=\dim\mc{H}$ is the dimension of the Hilbert space. It can be verified that $X^\dagger X=Z^\dagger Z=\mbb{1}_\mc{H}$, $X^d=Z^d=\mbb{1}_\mc{H}$, and $Z^l X^k=\omega^{lk}X^kZ^l$. Also, when $d=2$, then $Z$ and $X$ are the usual $\sigma_z$ and $\sigma_x$ Pauli matrices, respectively. Thus, $\sigma_\theta$ becomes $\cos\theta X+\sin\theta Z\eqqcolon \overline{\sigma}_\theta$. Because we have $d$ outcomes, we need to take the powers of $\overline{\sigma}_\theta$ to codify the information about the outcome in different states of the environment:
\be\label{U_noisy}
U_\textrm{noisy CNOT}=\ket{0}\bra{0}\otimes\mbb{1}+\ket{1}\bra{1}\otimes \overline{\sigma}_\theta+\ket{2}\bra{2}\otimes \overline{\sigma}_\theta^2+\ldots.
\ee
However, the above operator cannot always be used because it does not represent a unitary evolution in general. The reason why is because $\overline{\sigma}_\theta$ is not unitary for $d\geqslant 3$. In fact, it is impossible to obtain a unitary operator as a linear combination of only $X$ and $Z$ for $d\geqslant 3$~\cite{kaniewski2019maximal,santos2022scalable}.  Nonetheless, it is known that one can always find a linear combination of operators in the form $\{XZ^k\}$ or $\{ZX^k\}$ for $k\, =\, 0,\ldots,d-1$ that results in a unitary operator~\cite{bandyopadhyay2002new}.

Although $Z$ contains complex eigenvalues, it can fully characterize a valid observable. More generally, consider a $d$-outcome quantum measurement $M=\{M_a\}$ which is defined by positive-semidefinite operators $M_a$ acting on a Hilbert space $\mc{H}$, such that $\dim\mc{H}=d$ and $\sum_a M_a=\mbb{1}_\mc{H}$. The measurement $M$ is simply a POVM, but if $M_aM_{a'}=\delta_{aa'}M_a$, then $M$ is a projective measurement. For qudits, it is convenient to use generalized observables by taking a discrete Fourier transform of $M$ given by
\be
T^{(i)}=\sum_{a\, =\, 0}^{d-1} \omega^{ia}M_a.
\ee
Immediately, $T^{(0)}=\mbb{1}_\mc{H}$. By performing the inverse Fourier transform on the generalized observables $T^{(i)}$, one can recover the measurement $M$ with 
\be
M_a=\dfrac{1}{d}\sum_{i\, =\, 0}^{d-1}\omega^{-ia}T^{(i)}.
\ee
Therefore, the generalized observables $T^{(i)}$ fully characterize the measurement $M$. Also, the POVM $\{M_a\}$ is a projective measurement \textit{iff} $T\coloneqq T^{(1)}$ is a unitary matrix. In that case, $T$ is called the \textit{unitary observable} of $\{M_a\}$ and $T^{(i)}=T^i$~\cite{kaniewski2019maximal}.

The goal now is to find a unitary observable $T$, in replacement to $\overline{\sigma}_\theta$, that appropriately codifies the information regarding a given observable of interest into the environment, such that, the resulting operation over the system is a monitoring map~\eqref{monitoring}.

\section{Results}\label{results}

Let $\mc{H=H_S}\otimes\mc{H_E}$, where $\mc{S}$ and $\mc{E}$ stand for system and environment, respectively, such that $\mc{H_S}=\mc{H_A\otimes H_B}$ and $\mc{H_E}=\bigotimes_{k=1}^n \mc{H}_{\mc{E}_k}$, with party $\mc{A}$ containing one qudit and party $\mc{E}$ containing $n$ qudits (see Fig.~\ref{fig1}). The fragment $\mc{F}_m$ of the environment $\mc{E}$ is constituted by $\mc{H}_{\mc{F}_m}=\bigotimes_{k=1}^m \mc{H}_{\mc{E}_k}$, with $ m\leqslant n$. For our purposes, $\dim\mc{H_A}=\dim\mc{H}_{\mc{E}_i}\eqqcolon d$ for all $i$. 

We propose a noisy CNOT gate $U_{\mc{SE}_i}$ that acts over $\mc{H}_{\mc{AE}_i}$, which correlates one qudit of the system with qudit $i$ in the environment, and it is defined as
\be\label{U}
U_{\mc{SE}_i}\coloneqq\sum_{j\, =\, 0}^{d-1}P_j\otimes T^j,
\ee
where $P_j=A_j\otimes \mbb{1}_\mc{B}$ is the projector acting on subspace $\mc{H}_{\mc{S}}$ and $T$ is an operator given by
\be\label{operatorT}
T\coloneqq\sum_{k\, =\, 0}^{d-1}\alpha_kZX^k,
\ee
where $\alpha_k$ are coefficients such that $\sum_{k=0}^{d-1}|\alpha_k|^2=1$. For $T$ to be a unitary observable (see previous section), the following conditions must hold
\be
\alpha_k = \dfrac{1}{d}\sum_{l\, =\, 0}^{d-1}\omega^{lk}e^{\mbb{i}\phi_l},\quad\text{ and }\quad
\sum_{l=0}^{d-1} \phi_l=0,
\ee
where, now, $T$ is specified by a set of $d-1$ phases $\{\phi_l \}$ (see Appendix \ref{properties}). Operator $T$ does not depend on the choice of $A$ since $A$ is an arbitrary measurement and $T$ is just a way to encode information about the system in the environment. We have chosen the set $\{ZX^k\}$ instead of $\{XZ^k\}$ because we need non-zero terms in the main diagonal of our operator.

Without loss of generality, we can suppose that every qudit in the environment is in the state $\ket{0}_{\mc{E}_i}$. If one chooses to use operators $\{XZ^k\}$, it would require to prepare every environmental qudit in a Fourier transform of $\ket{0}$ in order to obtain the monitoring map. Therefore, we proceed with $\{ZX^k\}$. Let us begin with the interaction between the system and the first qudit in the environment given by $\rho_{\mc{SE}_1}'\coloneqq U_{\mc{SE}_1} \rho_{\mc{SE}_1} U_{\mc{SE}_1}^\dagger$, where $\rho_{\mc{SE}_1}=\rho_\mc{AB}\otimes\ket{0}\bra{0}_{\mc{E}_1}$. From~\eqref{U}, we get
\be\label{trace_E1}
\Tr_{\mc{E}_1}\rho_{\mc{SE}_1}'=\sum_{i,j\, =\, 0}^{d-1}P_i\:\rho_\mc{AB}\: P_j\:\bra{0} (T^j)^\dagger T^i \ket{0}_{\mc{E}_1}.
\ee
After some algebra (see Appendix \ref{properties}), it is possible to verify that
\be\label{braket0TT0}
\bra{0}(T^j)^\dagger T^i\ket{0}_{\mc{E}_1} = \dfrac{1}{d}\sum_{q=0}^{d-1}\exp\left(\mbb{i}\sum_{m=0}^{[i-j]_d-1}\phi_{[q+m]_d} \right),
\ee
where $[a]_d\coloneqq a\mod d$, for any $a\in\mathbb{Z}$. In particular, if $[i-j]_d=0$ in~\eqref{braket0TT0}, then $\bra{0}(T^j)^\dagger T^i\ket{0}_{\mc{E}_1}=1$.

A closer look at the monitoring map~\eqref{monitoring} allows us to write
\be
\mc{M}_A^{\epsilon=1-\eta}(\rho_\mc{AB})=\sum_{i,j=0}^{d-1}P_{i}\:\rho_\mc{AB}\: P_{j} \left[\eta+(1-\eta)\delta_{i,j} \right],
\ee
where $\eta\in[0,1]$ acts as the noise of the monitoring, i.e., the complement of the monitoring intensity. Since we require that 
\be\label{traceE1}
\Tr_{\mc{E}_1}\rho_{\mc{SE}_1}'=\mc{M}_A^{\epsilon=1-\eta}(\rho_\mc{AB}),
\ee
we can conclude from Eqs.~\eqref{trace_E1} to~\eqref{traceE1} that
\be\label{eta}
\eta=\dfrac{1}{d}\sum_{q=0}^{d-1}\cos\phi_{q},
\ee
such that, the phases $\{\phi_q\}_{q=0}^{d-1}$ must satisfy:
\begin{subequations}\label{system}
    \begin{align}
        &\eta\geqslant 0;\\
        &\sum_{q=0}^{d-1}\sin\left(\sum_{m=0}^{p}\phi_{[q+m]_d} \right)=0,\\
        &\sum_{q=0}^{d-1}\cos\left(\sum_{m=0}^{p}\phi_{[q+m]_d} \right)=\sum_{q=0}^{d-1}\cos\left(\sum_{m=0}^{p'}\phi_{[q+m]_d} \right),
    \end{align}
\end{subequations}
for every $p,p'$ satisfying $0\leqslant p<p'\leqslant d-2$ (see Appendix \ref{properties}). Although solving the above transcendental equations in full generality turns out to be difficult, it is possible to find a particular analytical solution for any $d$ given by $\phi_0=\theta$, $\phi_1=-\theta$, and $\phi_{k\geqslant 2}=0$. This solution for the particular case of $d=3$ gives rise to a noise intensity $\eta=(1+2\cos\theta)/3$. In order to find other and more nontrivial solutions, one can employ numerics. In fact, in our numerical exploration for dimensions $d=4,5,6,7$ such solutions exist for any noise rate $\eta\in[0,1]$, which are distinct from the analytical one we just mentioned.

Operator~\eqref{U} is indeed quite general. Because of the Stinespring dilation theorem~\cite{kretschmann2008,nielsen2010quantum}, the unitary operator that provides the monitoring map $\mc{M}_A^\epsilon$, after the partial trace, must be unique, up to local isometries. For all practical purposes, we have $\eta>0$, since we are always going to find the presence of noise in any interaction. Therefore, even if the access to the information present in the environment by a macroscopic observer occurs through a strong projective measurement, such an act does not extract the complete information regarding the observable of interest in the system. However, in the noiseless regime when $\eta=0$, our model reproduces the perfect record case $\ket{k}_\mc{S}\otimes\ket{0}_{\mc{E}_i}\to \ket{k}_\mc{S}\otimes\ket{k}_{\mc{E}_i}$ from Zurek's work \cite{zurek2009quantum}, as we are going to see further.

The natural question to ask now is how to codify full information about an observable ($\epsilon\to 1$) into the environmental qudits by using gate~\eqref{U}? We are going to show that to obtain full information about the observable we need a bigger environment, i.e. more qudits. Consider the global unitary that performs the interaction between the system with the entire environment $U\coloneqq U_{\mc{SE}_n}\ldots U_{\mc{SE}_1}$. The initial global state of the system+environment is given by $\Omega\coloneqq \rho_{\mc{AB}}\otimes\bigotimes_{i=1}^{n}\ket{0}\bra{0}_{\mc{E}_i}$. After the unitary evolution $\Omega'\coloneqq U \Omega U^\dagger$, we can write
\be
    \Omega' = \sum_{i,j\, =\, 0}^{d-1} P_i\: \rho\: P_j \otimes \bigotimes_{k=1}^n T_i\ket{0}\bra{0}_{\mc{E}_k} (T_j)^\dagger.
\ee
If we trace out the whole environment, the remaining state of the system will be given by (see Appendix \ref{U_and_Tr})
\be\label{traceE}
\Tr_{\mc{E}}\Omega'=\mc{M}_A^{1-\eta^n}(\rho_\mc{AB}).
\ee
Now, we can apply Eq. (18) from~\cite{dieguez2018information}, i.e.,
    \be\label{recursion}
    \left[\mc{M}_A^\epsilon\right]^n(\rho_\mc{AB})=\mc{M}_A^{1-(1-\epsilon)^n}(\rho_\mc{AB})
    \ee
    to Eq.~\eqref{traceE} to obtain
    \be\label{Mpowers}
    \Tr_{\mc{E}}\Omega' = \left[\mc{M}_A^{\epsilon=1-\eta} \right]^n(\rho_\mc{AB}).
    \ee
Eq.~\eqref{traceE} shows that subsequent interactions between the system and $n$ environmental qudits through the noisy CNOT gate~\eqref{U} result in a monitoring map of intensity $1-\eta^n$, which allows us to interpret $\eta^n$ as the \textit{effective noise}. From~\eqref{Mpowers}, we can see that each weak interaction with some part of the environment produces one more monitoring over the system. This process clearly decreases the degree of irreality~\eqref{irreality} and makes the system more classical regarding observable $A$~\cite{costa2020information,orthey2022quantum}.

Immediately, from~\eqref{traceE} and the definition of a monitoring, we have
\be\label{limit_phiA}
\lim_{n\to+\infty}\left(\Tr_\mc{E} \Omega' \right)=\Phi_A(\rho_\mc{AB}),
\ee
for any solution of system~\eqref{system}, apart from the trivial solutions that result in $\eta=1$ (full noise). The above relation holds for every dimension $d\geqslant 2$, every state $\rho_\mc{S}$, and every observable $A$, which means that the whole effective environment has now access to the full information about the observable $A$ regarding $\rho_\mc{S}$, as demonstrated by 
\be
I(\varrho)-I(\Phi_{A}(\varrho))=\mathfrak{I}_A(\varrho).
\ee
Thus,~\eqref{limit_phiA} means that to guarantee the codification of the total accessible quantum information regarding observable $A$ of any system into the environment, it is sufficient for the system to be monitored by a large environment. In other words, our work demonstrates that for high-dimensional monitoring, redundant information spread in a large environment is a sufficient criterion for the emergence of realism, as we highlighted in~\eqref{info_realism} by employing the information-realism complementarity.

Indeed, when the system reaches a reality state for $A$ (states s.t. $\mathfrak{I}_A(\rho)=0$), all the quantum informational content that one could extract about $A$ has already been extracted. However, if the initial state is already a state of reality for $A$, then no quantum information can be acquired, since $\Phi_A(\Phi_A(\varrho))=\Phi_A(\varrho)$. In that case, a measurement of $A$ is just a revelation of a pre-defined quantity. Moreover, when $\bra{0}\left(T^j\right)^\dagger T^i\ket{0}=\delta_{ij}$, this is, in the noiseless case, it is possible to relabel the state $T^i\ket{0}=\ket{i}$, and the states $\ket{i}$ form a basis. As such, we find that the interaction chosen is typical of what one expects from \.{Z}urek's works \cite{zurek2009quantum}
\begin{equation}\label{zurek}
    \ket{k}_\mc{S}\otimes\ket{0}_{\mc{E}_i}\to \ket{k}_\mc{S}\otimes\ket{k}_{\mc{E}_i}. 
\end{equation}
In what follows we provide an explicit example for $d=3$ of how our model results in \eqref{zurek}. In the qutrit case, Eqs. \eqref{system} give $\phi_0=\theta$, $\phi_1=-\theta$, and $\phi_2=0$, such that $\eta=(1+2\cos\theta)/3$. The noiseless case $\eta=0$ requires $\theta=2\pi/3$, which gives $T=\omega ZX^2$. We can use the fact that $Z^l X^k=\omega^{lk}X^kZ^l$ to obtain
\begin{equation}
    U_\mc{SE}=\ket{0}\bra{0}\otimes\mathbbm{1}+\ket{1}\bra{1}\otimes \omega ZX^2+\ket{2}\bra{2}\otimes Z^2X,
\end{equation}
where $T^0=\mathbbm{1}$, $T^1=\omega ZX^2$, and $T^2=Z^2X$. Immediately,
\begin{equation}
    U_\mc{SE}\ket{k}\otimes\ket{e_0}=\ket{k}\otimes T^k\ket{e_0}=\ket{k}\otimes \ket{e_k},
\end{equation}
where $\braket{e_k}{e_{k'}}=\delta_{kk'}$, which represents the ``perfect record'' case from \cite{zurek2009quantum}. It is important to mention that the POVMs to be implemented in the laboratory are given by the inverse Fourier transform of $T$ given by Eq.~\eqref{operatorT}.

\section{Conclusion}\label{conclusion}

Irrealism can be framed as a resource~\cite{costa2020information} and has an intimate connection with the concept of information~\cite{dieguez2018information,orthey2022quantum}. Monitoring is identified as a realistic operation, meaning that in general the irrealism of the context $\{A,\rho_\mc{AB}\}$ is destroyed by $A$-monitorings~\cite{dieguez2018information} via the flow of information between the system and the auxiliary system (that we called environment). We filled a gap in such connections by providing an explicit unitary interaction between qudits to derive interpolations from weak to strong non-selective measurements. The most important conclusion of our analysis is that, for high-dimensional discrete monitoring, a state of reality for a given observable can be reached regardless of the intensity of the interaction between the system and the environment, as long as the latter is composed of a sufficiently large number of particles. In addition, from \eqref{eta}, the noise rate $\eta$ is still highly dependent on parameters $\{\phi_q\}$ even when $d\to\infty$. For instance, if $\min_q\cos\phi_q=Q$, for some $Q>0$, then $\eta> Q$.

Following the QD framework, if the system manages to disseminate redundant information about an observable to many particles in the environment, such a system will move towards a state of objectivity. In a nutshell, equation~\eqref{limit_phiA} means that objectivity from QD implies $A$-reality states from quantum realism for discrete and high-dimensional monitoring. Other connections can be made by exploring this new relation, for instance, it would be worth analyzing the volume of system solutions~\eqref{system} for different values of $\eta$. That could allow us to determine, whether it is easier to find unitaries that lead to weak or, instead, strong monitoring maps, and how the dimension of the system influences that process. Still, another possibility for further research is the time needed to probe the system via interaction with the environment. In addition, one could also explore the continuous limit of our model and investigate the connection between quantum Darwinism and non-Markovianity similarly to what is done in \cite{oliveira2019quantum}.

\begin{acknowledgements}
We are grateful to Jaros{\l}aw Korbicz for useful discussions. A.C.O and R.A acknowledge the QuantERA II Programme (VERIqTAS project) that has received funding from the European Union's Horizon 2020 research and innovation programme under Grant Agreement No 101017733 and from the Polish National Science Center (projects No 2021/03/Y/ST2/00175 and No. 2022/46/E/ST2/00115). P.R.D. acknowledges support from the Foundation for Polish Science (FNP) (IRAP project, ICTQT, Contract No MAB/2018/5, co-financed by EU within Smart Growth Operational Programme). O.M. and R.A. acknowledge the Polish NSC through the SONATA BIS (grant No 2019/34/E/ST2/00369).
\end{acknowledgements}

\bibliography{bibliography.bib}

\begin{thebibliography}{46}%
\makeatletter
\providecommand \@ifxundefined [1]{%
 \@ifx{#1\undefined}
}%
\providecommand \@ifnum [1]{%
 \ifnum #1\expandafter \@firstoftwo
 \else \expandafter \@secondoftwo
 \fi
}%
\providecommand \@ifx [1]{%
 \ifx #1\expandafter \@firstoftwo
 \else \expandafter \@secondoftwo
 \fi
}%
\providecommand \natexlab [1]{#1}%
\providecommand \enquote  [1]{``#1''}%
\providecommand \bibnamefont  [1]{#1}%
\providecommand \bibfnamefont [1]{#1}%
\providecommand \citenamefont [1]{#1}%
\providecommand \href@noop [0]{\@secondoftwo}%
\providecommand \href [0]{\begingroup \@sanitize@url \@href}%
\providecommand \@href[1]{\@@startlink{#1}\@@href}%
\providecommand \@@href[1]{\endgroup#1\@@endlink}%
\providecommand \@sanitize@url [0]{\catcode `\\12\catcode `\$12\catcode
  `\&12\catcode `\#12\catcode `\^12\catcode `\_12\catcode `\%12\relax}%
\providecommand \@@startlink[1]{}%
\providecommand \@@endlink[0]{}%
\providecommand \url  [0]{\begingroup\@sanitize@url \@url }%
\providecommand \@url [1]{\endgroup\@href {#1}{\urlprefix }}%
\providecommand \urlprefix  [0]{URL }%
\providecommand \Eprint [0]{\href }%
\providecommand \doibase [0]{https://doi.org/}%
\providecommand \selectlanguage [0]{\@gobble}%
\providecommand \bibinfo  [0]{\@secondoftwo}%
\providecommand \bibfield  [0]{\@secondoftwo}%
\providecommand \translation [1]{[#1]}%
\providecommand \BibitemOpen [0]{}%
\providecommand \bibitemStop [0]{}%
\providecommand \bibitemNoStop [0]{.\EOS\space}%
\providecommand \EOS [0]{\spacefactor3000\relax}%
\providecommand \BibitemShut  [1]{\csname bibitem#1\endcsname}%
\let\auto@bib@innerbib\@empty
\bibitem [{\citenamefont {Briegel}\ \emph {et~al.}(2009)\citenamefont
  {Briegel}, \citenamefont {Browne}, \citenamefont {D{\"u}r}, \citenamefont
  {Raussendorf},\ and\ \citenamefont {Van~den Nest}}]{briegel2009measurement}%
  \BibitemOpen
  \bibfield  {author} {\bibinfo {author} {\bibfnamefont {H.~J.}\ \bibnamefont
  {Briegel}}, \bibinfo {author} {\bibfnamefont {D.~E.}\ \bibnamefont {Browne}},
  \bibinfo {author} {\bibfnamefont {W.}~\bibnamefont {D{\"u}r}}, \bibinfo
  {author} {\bibfnamefont {R.}~\bibnamefont {Raussendorf}},\ and\ \bibinfo
  {author} {\bibfnamefont {M.}~\bibnamefont {Van~den Nest}},\ }\bibfield
  {title} {\bibinfo {title} {Measurement-based quantum computation},\ }\href
  {https://doi.org/https://doi.org/10.1038/nphys1157} {\bibfield  {journal}
  {\bibinfo  {journal} {Nat. Phys.}\ }\textbf {\bibinfo {volume} {5}},\
  \bibinfo {pages} {19} (\bibinfo {year} {2009})}\BibitemShut {NoStop}%
\bibitem [{\citenamefont {Raussendorf}\ \emph {et~al.}(2016)\citenamefont
  {Raussendorf}, \citenamefont {Sarvepalli}, \citenamefont {Wei},\ and\
  \citenamefont {Haghnegahdar}}]{raussendorf2016symmetry}%
  \BibitemOpen
  \bibfield  {author} {\bibinfo {author} {\bibfnamefont {R.}~\bibnamefont
  {Raussendorf}}, \bibinfo {author} {\bibfnamefont {P.}~\bibnamefont
  {Sarvepalli}}, \bibinfo {author} {\bibfnamefont {T.-C.}\ \bibnamefont
  {Wei}},\ and\ \bibinfo {author} {\bibfnamefont {P.}~\bibnamefont
  {Haghnegahdar}},\ }\bibfield  {title} {\bibinfo {title} {Symmetry constraints
  on temporal order in measurement-based quantum computation},\ }\href
  {https://doi.org/https://doi.org/10.1016/j.ic.2016.02.010} {\bibfield
  {journal} {\bibinfo  {journal} {Inf. Comp.}\ }\textbf {\bibinfo {volume}
  {250}},\ \bibinfo {pages} {115} (\bibinfo {year} {2016})}\BibitemShut
  {NoStop}%
\bibitem [{\citenamefont {Behzadi}(2020)}]{behzadi2020quantum}%
  \BibitemOpen
  \bibfield  {author} {\bibinfo {author} {\bibfnamefont {N.}~\bibnamefont
  {Behzadi}},\ }\bibfield  {title} {\bibinfo {title} {Quantum engine based on
  general measurements},\ }\href
  {https://doi.org/https://doi.org/10.1088/1751-8121/abca74} {\bibfield
  {journal} {\bibinfo  {journal} {J. Phys A: Math. Theo.}\ }\textbf {\bibinfo
  {volume} {54}},\ \bibinfo {pages} {015304} (\bibinfo {year}
  {2020})}\BibitemShut {NoStop}%
\bibitem [{\citenamefont {Buffoni}\ \emph {et~al.}(2019)\citenamefont
  {Buffoni}, \citenamefont {Solfanelli}, \citenamefont {Verrucchi},
  \citenamefont {Cuccoli},\ and\ \citenamefont {Campisi}}]{buffoni2019quantum}%
  \BibitemOpen
  \bibfield  {author} {\bibinfo {author} {\bibfnamefont {L.}~\bibnamefont
  {Buffoni}}, \bibinfo {author} {\bibfnamefont {A.}~\bibnamefont {Solfanelli}},
  \bibinfo {author} {\bibfnamefont {P.}~\bibnamefont {Verrucchi}}, \bibinfo
  {author} {\bibfnamefont {A.}~\bibnamefont {Cuccoli}},\ and\ \bibinfo {author}
  {\bibfnamefont {M.}~\bibnamefont {Campisi}},\ }\bibfield  {title} {\bibinfo
  {title} {Quantum measurement cooling},\ }\href
  {https://doi.org/https://doi.org/10.1103/PhysRevLett.122.070603} {\bibfield
  {journal} {\bibinfo  {journal} {Phys. Rev. Lett.}\ }\textbf {\bibinfo
  {volume} {122}},\ \bibinfo {pages} {070603} (\bibinfo {year}
  {2019})}\BibitemShut {NoStop}%
\bibitem [{\citenamefont {Bresque}\ \emph {et~al.}(2021)\citenamefont
  {Bresque}, \citenamefont {Camati}, \citenamefont {Rogers}, \citenamefont
  {Murch}, \citenamefont {Jordan},\ and\ \citenamefont
  {Auff{\`e}ves}}]{bresque2021two}%
  \BibitemOpen
  \bibfield  {author} {\bibinfo {author} {\bibfnamefont {L.}~\bibnamefont
  {Bresque}}, \bibinfo {author} {\bibfnamefont {P.~A.}\ \bibnamefont {Camati}},
  \bibinfo {author} {\bibfnamefont {S.}~\bibnamefont {Rogers}}, \bibinfo
  {author} {\bibfnamefont {K.}~\bibnamefont {Murch}}, \bibinfo {author}
  {\bibfnamefont {A.~N.}\ \bibnamefont {Jordan}},\ and\ \bibinfo {author}
  {\bibfnamefont {A.}~\bibnamefont {Auff{\`e}ves}},\ }\bibfield  {title}
  {\bibinfo {title} {Two-qubit engine fueled by entanglement and local
  measurements},\ }\href
  {https://doi.org/https://doi.org/10.1103/PhysRevLett.126.120605} {\bibfield
  {journal} {\bibinfo  {journal} {Phys. Rev. Lett.}\ }\textbf {\bibinfo
  {volume} {126}},\ \bibinfo {pages} {120605} (\bibinfo {year}
  {2021})}\BibitemShut {NoStop}%
\bibitem [{\citenamefont {Lisboa}\ \emph {et~al.}(2022)\citenamefont {Lisboa},
  \citenamefont {Dieguez}, \citenamefont {Guimar{\~a}es}, \citenamefont
  {Santos},\ and\ \citenamefont {Serra}}]{lisboa2022experimental}%
  \BibitemOpen
  \bibfield  {author} {\bibinfo {author} {\bibfnamefont {V.}~\bibnamefont
  {Lisboa}}, \bibinfo {author} {\bibfnamefont {P.}~\bibnamefont {Dieguez}},
  \bibinfo {author} {\bibfnamefont {J.}~\bibnamefont {Guimar{\~a}es}}, \bibinfo
  {author} {\bibfnamefont {J.}~\bibnamefont {Santos}},\ and\ \bibinfo {author}
  {\bibfnamefont {R.}~\bibnamefont {Serra}},\ }\bibfield  {title} {\bibinfo
  {title} {Experimental investigation of a quantum heat engine powered by
  generalized measurements},\ }\href
  {https://doi.org/https://doi.org/10.1103/PhysRevA.106.022436} {\bibfield
  {journal} {\bibinfo  {journal} {Phys. Rev. A}\ }\textbf {\bibinfo {volume}
  {106}},\ \bibinfo {pages} {022436} (\bibinfo {year} {2022})}\BibitemShut
  {NoStop}%
\bibitem [{\citenamefont {Dieguez}\ \emph {et~al.}(2023)\citenamefont
  {Dieguez}, \citenamefont {Lisboa},\ and\ \citenamefont
  {Serra}}]{dieguez2023thermal}%
  \BibitemOpen
  \bibfield  {author} {\bibinfo {author} {\bibfnamefont {P.~R.}\ \bibnamefont
  {Dieguez}}, \bibinfo {author} {\bibfnamefont {V.~F.}\ \bibnamefont
  {Lisboa}},\ and\ \bibinfo {author} {\bibfnamefont {R.~M.}\ \bibnamefont
  {Serra}},\ }\bibfield  {title} {\bibinfo {title} {Thermal devices powered by
  generalized measurements with indefinite causal order},\ }\href
  {https://doi.org/https://doi.org/10.1103/PhysRevA.107.012423} {\bibfield
  {journal} {\bibinfo  {journal} {Phys. Rev. A}\ }\textbf {\bibinfo {volume}
  {107}},\ \bibinfo {pages} {012423} (\bibinfo {year} {2023})}\BibitemShut
  {NoStop}%
\bibitem [{\citenamefont {Zwerger}\ \emph {et~al.}(2016)\citenamefont
  {Zwerger}, \citenamefont {Briegel},\ and\ \citenamefont
  {D{\"u}r}}]{zwerger2016measurement}%
  \BibitemOpen
  \bibfield  {author} {\bibinfo {author} {\bibfnamefont {M.}~\bibnamefont
  {Zwerger}}, \bibinfo {author} {\bibfnamefont {H.}~\bibnamefont {Briegel}},\
  and\ \bibinfo {author} {\bibfnamefont {W.}~\bibnamefont {D{\"u}r}},\
  }\bibfield  {title} {\bibinfo {title} {Measurement-based quantum
  communication},\ }\href
  {https://doi.org/https://doi.org/10.1007/s00340-015-6285-8} {\bibfield
  {journal} {\bibinfo  {journal} {App. Phys. B}\ }\textbf {\bibinfo {volume}
  {122}},\ \bibinfo {pages} {1} (\bibinfo {year} {2016})}\BibitemShut {NoStop}%
\bibitem [{\citenamefont {Schlosshauer}(2005)}]{schlosshauer2005decoherence}%
  \BibitemOpen
  \bibfield  {author} {\bibinfo {author} {\bibfnamefont {M.}~\bibnamefont
  {Schlosshauer}},\ }\bibfield  {title} {\bibinfo {title} {Decoherence, the
  measurement problem, and interpretations of quantum mechanics},\ }\href
  {https://doi.org/https://doi.org/10.1103/RevModPhys.76.1267} {\bibfield
  {journal} {\bibinfo  {journal} {Rev. Mod. Phys.}\ }\textbf {\bibinfo {volume}
  {76}},\ \bibinfo {pages} {1267} (\bibinfo {year} {2005})}\BibitemShut
  {NoStop}%
\bibitem [{\citenamefont {Dieguez}\ and\ \citenamefont
  {Angelo}(2018{\natexlab{a}})}]{dieguez2018information}%
  \BibitemOpen
  \bibfield  {author} {\bibinfo {author} {\bibfnamefont {P.~R.}\ \bibnamefont
  {Dieguez}}\ and\ \bibinfo {author} {\bibfnamefont {R.~M.}\ \bibnamefont
  {Angelo}},\ }\bibfield  {title} {\bibinfo {title} {Information-reality
  complementarity: The role of measurements and quantum reference frames},\
  }\href {https://doi.org/10.1103/PhysRevA.97.022107} {\bibfield  {journal}
  {\bibinfo  {journal} {Phys. Rev. A}\ }\textbf {\bibinfo {volume} {97}},\
  \bibinfo {pages} {022107} (\bibinfo {year} {2018}{\natexlab{a}})}\BibitemShut
  {NoStop}%
\bibitem [{\citenamefont {Zurek}(2003)}]{zurek2003decoherence}%
  \BibitemOpen
  \bibfield  {author} {\bibinfo {author} {\bibfnamefont {W.~H.}\ \bibnamefont
  {Zurek}},\ }\bibfield  {title} {\bibinfo {title} {Decoherence, einselection,
  and the quantum origins of the classical},\ }\href
  {https://doi.org/https://doi.org/10.1103/RevModPhys.75.715} {\bibfield
  {journal} {\bibinfo  {journal} {Rev. Mod. Phys.}\ }\textbf {\bibinfo {volume}
  {75}},\ \bibinfo {pages} {715} (\bibinfo {year} {2003})}\BibitemShut
  {NoStop}%
\bibitem [{\citenamefont {Zurek}(2009)}]{zurek2009quantum}%
  \BibitemOpen
  \bibfield  {author} {\bibinfo {author} {\bibfnamefont {W.~H.}\ \bibnamefont
  {Zurek}},\ }\bibfield  {title} {\bibinfo {title} {Quantum darwinism},\ }\href
  {https://doi.org/10.1038/nphys1202} {\bibfield  {journal} {\bibinfo
  {journal} {Nat. Phys.}\ }\textbf {\bibinfo {volume} {5}},\ \bibinfo {pages}
  {181} (\bibinfo {year} {2009})}\BibitemShut {NoStop}%
\bibitem [{\citenamefont {Touil}\ \emph {et~al.}(2022)\citenamefont {Touil},
  \citenamefont {Yan}, \citenamefont {Girolami}, \citenamefont {Deffner},\ and\
  \citenamefont {Zurek}}]{touil2022eavesdropping}%
  \BibitemOpen
  \bibfield  {author} {\bibinfo {author} {\bibfnamefont {A.}~\bibnamefont
  {Touil}}, \bibinfo {author} {\bibfnamefont {B.}~\bibnamefont {Yan}}, \bibinfo
  {author} {\bibfnamefont {D.}~\bibnamefont {Girolami}}, \bibinfo {author}
  {\bibfnamefont {S.}~\bibnamefont {Deffner}},\ and\ \bibinfo {author}
  {\bibfnamefont {W.~H.}\ \bibnamefont {Zurek}},\ }\bibfield  {title} {\bibinfo
  {title} {Eavesdropping on the decohering environment: Quantum darwinism,
  amplification, and the origin of objective classical reality},\ }\href
  {https://doi.org/10.1103/PhysRevLett.128.010401} {\bibfield  {journal}
  {\bibinfo  {journal} {Phys. Rev. Lett.}\ }\textbf {\bibinfo {volume} {128}},\
  \bibinfo {pages} {010401} (\bibinfo {year} {2022})}\BibitemShut {NoStop}%
\bibitem [{\citenamefont {Korbicz}\ \emph {et~al.}(2014)\citenamefont
  {Korbicz}, \citenamefont {Horodecki},\ and\ \citenamefont
  {Horodecki}}]{korbicz2014objectivity}%
  \BibitemOpen
  \bibfield  {author} {\bibinfo {author} {\bibfnamefont {J.~K.}\ \bibnamefont
  {Korbicz}}, \bibinfo {author} {\bibfnamefont {P.}~\bibnamefont {Horodecki}},\
  and\ \bibinfo {author} {\bibfnamefont {R.}~\bibnamefont {Horodecki}},\
  }\bibfield  {title} {\bibinfo {title} {Objectivity in a noisy photonic
  environment through quantum state information broadcasting},\ }\href
  {https://doi.org/10.1103/PhysRevLett.112.120402} {\bibfield  {journal}
  {\bibinfo  {journal} {Phys. Rev. Lett.}\ }\textbf {\bibinfo {volume} {112}},\
  \bibinfo {pages} {120402} (\bibinfo {year} {2014})}\BibitemShut {NoStop}%
\bibitem [{\citenamefont {Horodecki}\ \emph {et~al.}(2015)\citenamefont
  {Horodecki}, \citenamefont {Korbicz},\ and\ \citenamefont
  {Horodecki}}]{horodecki2015quantum}%
  \BibitemOpen
  \bibfield  {author} {\bibinfo {author} {\bibfnamefont {R.}~\bibnamefont
  {Horodecki}}, \bibinfo {author} {\bibfnamefont {J.~K.}\ \bibnamefont
  {Korbicz}},\ and\ \bibinfo {author} {\bibfnamefont {P.}~\bibnamefont
  {Horodecki}},\ }\bibfield  {title} {\bibinfo {title} {Quantum origins of
  objectivity},\ }\href {https://doi.org/10.1103/PhysRevA.91.032122} {\bibfield
   {journal} {\bibinfo  {journal} {Phys. Rev. A}\ }\textbf {\bibinfo {volume}
  {91}},\ \bibinfo {pages} {032122} (\bibinfo {year} {2015})}\BibitemShut
  {NoStop}%
\bibitem [{\citenamefont {Korbicz}(2021)}]{korbicz2021roads}%
  \BibitemOpen
  \bibfield  {author} {\bibinfo {author} {\bibfnamefont {J.~K.}\ \bibnamefont
  {Korbicz}},\ }\bibfield  {title} {\bibinfo {title} {Roads to objectivity:
  {Q}uantum {D}arwinism, {S}pectrum {B}roadcast {S}tructures, and {S}trong
  quantum {D}arwinism - a review},\ }\href
  {https://doi.org/10.22331/q-2021-11-08-571} {\bibfield  {journal} {\bibinfo
  {journal} {{Quantum}}\ }\textbf {\bibinfo {volume} {5}},\ \bibinfo {pages}
  {571} (\bibinfo {year} {2021})}\BibitemShut {NoStop}%
\bibitem [{\citenamefont {Bilobran}\ and\ \citenamefont
  {Angelo}(2015)}]{bilobran2015measure}%
  \BibitemOpen
  \bibfield  {author} {\bibinfo {author} {\bibfnamefont {A.~L.~O.}\
  \bibnamefont {Bilobran}}\ and\ \bibinfo {author} {\bibfnamefont {R.~M.}\
  \bibnamefont {Angelo}},\ }\bibfield  {title} {\bibinfo {title} {A measure of
  physical reality},\ }\href {https://doi.org/10.1209/0295-5075/112/40005}
  {\bibfield  {journal} {\bibinfo  {journal} {EPL}\ }\textbf {\bibinfo {volume}
  {112}},\ \bibinfo {pages} {40005} (\bibinfo {year} {2015})}\BibitemShut
  {NoStop}%
\bibitem [{\citenamefont {Mancino}\ \emph {et~al.}(2018)\citenamefont
  {Mancino}, \citenamefont {Sbroscia}, \citenamefont {Roccia}, \citenamefont
  {Gianani}, \citenamefont {Cimini}, \citenamefont {Paternostro},\ and\
  \citenamefont {Barbieri}}]{mancino2018information}%
  \BibitemOpen
  \bibfield  {author} {\bibinfo {author} {\bibfnamefont {L.}~\bibnamefont
  {Mancino}}, \bibinfo {author} {\bibfnamefont {M.}~\bibnamefont {Sbroscia}},
  \bibinfo {author} {\bibfnamefont {E.}~\bibnamefont {Roccia}}, \bibinfo
  {author} {\bibfnamefont {I.}~\bibnamefont {Gianani}}, \bibinfo {author}
  {\bibfnamefont {V.}~\bibnamefont {Cimini}}, \bibinfo {author} {\bibfnamefont
  {M.}~\bibnamefont {Paternostro}},\ and\ \bibinfo {author} {\bibfnamefont
  {M.}~\bibnamefont {Barbieri}},\ }\bibfield  {title} {\bibinfo {title}
  {Information-reality complementarity in photonic weak measurements},\ }\href
  {https://doi.org/10.1103/PhysRevA.97.062108} {\bibfield  {journal} {\bibinfo
  {journal} {Phys. Rev. A}\ }\textbf {\bibinfo {volume} {97}},\ \bibinfo
  {pages} {062108} (\bibinfo {year} {2018})}\BibitemShut {NoStop}%
\bibitem [{\citenamefont {Einstein}\ \emph {et~al.}(1935)\citenamefont
  {Einstein}, \citenamefont {Podolsky},\ and\ \citenamefont
  {Rosen}}]{einstein1935can}%
  \BibitemOpen
  \bibfield  {author} {\bibinfo {author} {\bibfnamefont {A.}~\bibnamefont
  {Einstein}}, \bibinfo {author} {\bibfnamefont {B.}~\bibnamefont {Podolsky}},\
  and\ \bibinfo {author} {\bibfnamefont {N.}~\bibnamefont {Rosen}},\ }\bibfield
   {title} {\bibinfo {title} {Can quantum-mechanical description of physical
  reality be considered complete?},\ }\href
  {https://doi.org/10.1103/PhysRev.47.777} {\bibfield  {journal} {\bibinfo
  {journal} {Phys. Rev.}\ }\textbf {\bibinfo {volume} {47}},\ \bibinfo {pages}
  {777} (\bibinfo {year} {1935})}\BibitemShut {NoStop}%
\bibitem [{\citenamefont {Orthey~Jr.}\ and\ \citenamefont
  {Angelo}(2022)}]{orthey2022quantum}%
  \BibitemOpen
  \bibfield  {author} {\bibinfo {author} {\bibfnamefont {A.~C.}\ \bibnamefont
  {Orthey~Jr.}}\ and\ \bibinfo {author} {\bibfnamefont {R.~M.}\ \bibnamefont
  {Angelo}},\ }\bibfield  {title} {\bibinfo {title} {Quantum realism:
  {A}xiomatization and quantification},\ }\href
  {https://doi.org/10.1103/PhysRevA.105.052218} {\bibfield  {journal} {\bibinfo
   {journal} {Phys. Rev. A}\ }\textbf {\bibinfo {volume} {105}},\ \bibinfo
  {pages} {052218} (\bibinfo {year} {2022})}\BibitemShut {NoStop}%
\bibitem [{\citenamefont {Kaniewski}\ \emph {et~al.}(2019)\citenamefont
  {Kaniewski}, \citenamefont {{\v{S}}upi{\'c}}, \citenamefont {Tura},
  \citenamefont {Baccari}, \citenamefont {Salavrakos},\ and\ \citenamefont
  {Augusiak}}]{kaniewski2019maximal}%
  \BibitemOpen
  \bibfield  {author} {\bibinfo {author} {\bibfnamefont {J.}~\bibnamefont
  {Kaniewski}}, \bibinfo {author} {\bibfnamefont {I.}~\bibnamefont
  {{\v{S}}upi{\'c}}}, \bibinfo {author} {\bibfnamefont {J.}~\bibnamefont
  {Tura}}, \bibinfo {author} {\bibfnamefont {F.}~\bibnamefont {Baccari}},
  \bibinfo {author} {\bibfnamefont {A.}~\bibnamefont {Salavrakos}},\ and\
  \bibinfo {author} {\bibfnamefont {R.}~\bibnamefont {Augusiak}},\ }\bibfield
  {title} {\bibinfo {title} {Maximal nonlocality from maximal entanglement and
  mutually unbiased bases, and self-testing of two-qutrit quantum systems},\
  }\href {https://doi.org/https://doi.org/10.22331/q-2019-10-24-198} {\bibfield
   {journal} {\bibinfo  {journal} {Quantum}\ }\textbf {\bibinfo {volume} {3}},\
  \bibinfo {pages} {198} (\bibinfo {year} {2019})}\BibitemShut {NoStop}%
\bibitem [{\citenamefont {Oreshkov}\ and\ \citenamefont
  {Brun}(2005)}]{oreshkov2005weak}%
  \BibitemOpen
  \bibfield  {author} {\bibinfo {author} {\bibfnamefont {O.}~\bibnamefont
  {Oreshkov}}\ and\ \bibinfo {author} {\bibfnamefont {T.~A.}\ \bibnamefont
  {Brun}},\ }\bibfield  {title} {\bibinfo {title} {Weak measurements are
  universal},\ }\href
  {https://doi.org/https://doi.org/10.1103/PhysRevLett.95.110409} {\bibfield
  {journal} {\bibinfo  {journal} {Phys. Rev. Lett.}\ }\textbf {\bibinfo
  {volume} {95}},\ \bibinfo {pages} {110409} (\bibinfo {year}
  {2005})}\BibitemShut {NoStop}%
\bibitem [{\citenamefont {Pan}\ \emph {et~al.}(2020)\citenamefont {Pan},
  \citenamefont {Zhang}, \citenamefont {Cohen}, \citenamefont {Wu},
  \citenamefont {Chen},\ and\ \citenamefont {Davidson}}]{pan2020weak}%
  \BibitemOpen
  \bibfield  {author} {\bibinfo {author} {\bibfnamefont {Y.}~\bibnamefont
  {Pan}}, \bibinfo {author} {\bibfnamefont {J.}~\bibnamefont {Zhang}}, \bibinfo
  {author} {\bibfnamefont {E.}~\bibnamefont {Cohen}}, \bibinfo {author}
  {\bibfnamefont {C.-w.}\ \bibnamefont {Wu}}, \bibinfo {author} {\bibfnamefont
  {P.-X.}\ \bibnamefont {Chen}},\ and\ \bibinfo {author} {\bibfnamefont
  {N.}~\bibnamefont {Davidson}},\ }\bibfield  {title} {\bibinfo {title}
  {Weak-to-strong transition of quantum measurement in a trapped-ion system},\
  }\href {https://doi.org/10.1038/s41567-020-0973-y} {\bibfield  {journal}
  {\bibinfo  {journal} {Nat. Phys.}\ }\textbf {\bibinfo {volume} {16}},\
  \bibinfo {pages} {1206} (\bibinfo {year} {2020})}\BibitemShut {NoStop}%
\bibitem [{\citenamefont {Gomes}\ and\ \citenamefont
  {Angelo}(2018)}]{gomes2018nonanomalous}%
  \BibitemOpen
  \bibfield  {author} {\bibinfo {author} {\bibfnamefont {V.~S.}\ \bibnamefont
  {Gomes}}\ and\ \bibinfo {author} {\bibfnamefont {R.~M.}\ \bibnamefont
  {Angelo}},\ }\bibfield  {title} {\bibinfo {title} {Nonanomalous realism-based
  measure of nonlocality},\ }\href {https://doi.org/10.1103/PhysRevA.97.012123}
  {\bibfield  {journal} {\bibinfo  {journal} {Phys. Rev. A}\ }\textbf {\bibinfo
  {volume} {97}},\ \bibinfo {pages} {012123} (\bibinfo {year}
  {2018})}\BibitemShut {NoStop}%
\bibitem [{\citenamefont {Rudnicki}(2018)}]{Rudnicki2018uncertainty}%
  \BibitemOpen
  \bibfield  {author} {\bibinfo {author} {\bibfnamefont {{\L}.}~\bibnamefont
  {Rudnicki}},\ }\bibfield  {title} {\bibinfo {title} {{Uncertainty-reality
  complementarity and entropic uncertainty relations}},\ }\href
  {https://doi.org/10.1088/1751-8121/aaecf5} {\bibfield  {journal} {\bibinfo
  {journal} {J. Phys. A: Math. Theor.}\ }\textbf {\bibinfo {volume} {51}},\
  \bibinfo {pages} {504001} (\bibinfo {year} {2018})}\BibitemShut {NoStop}%
\bibitem [{\citenamefont {Orthey~Jr.}\ and\ \citenamefont
  {Angelo}(2019)}]{orthey2019nonlocality}%
  \BibitemOpen
  \bibfield  {author} {\bibinfo {author} {\bibfnamefont {A.~C.}\ \bibnamefont
  {Orthey~Jr.}}\ and\ \bibinfo {author} {\bibfnamefont {R.~M.}\ \bibnamefont
  {Angelo}},\ }\bibfield  {title} {\bibinfo {title} {Nonlocality, quantum
  correlations, and violations of classical realism in the dynamics of two
  noninteracting quantum walkers},\ }\href
  {https://doi.org/10.1103/PhysRevA.100.042110} {\bibfield  {journal} {\bibinfo
   {journal} {Phys. Rev. A}\ }\textbf {\bibinfo {volume} {100}},\ \bibinfo
  {pages} {042110} (\bibinfo {year} {2019})}\BibitemShut {NoStop}%
\bibitem [{\citenamefont {Gomes}\ and\ \citenamefont
  {Angelo}(2019)}]{gomes2019resilience}%
  \BibitemOpen
  \bibfield  {author} {\bibinfo {author} {\bibfnamefont {V.~S.}\ \bibnamefont
  {Gomes}}\ and\ \bibinfo {author} {\bibfnamefont {R.~M.}\ \bibnamefont
  {Angelo}},\ }\bibfield  {title} {\bibinfo {title} {Resilience of
  realism-based nonlocality to local disturbance},\ }\href
  {https://doi.org/10.1103/PhysRevA.99.012109} {\bibfield  {journal} {\bibinfo
  {journal} {Phys. Rev. A}\ }\textbf {\bibinfo {volume} {99}},\ \bibinfo
  {pages} {012109} (\bibinfo {year} {2019})}\BibitemShut {NoStop}%
\bibitem [{\citenamefont {Fucci}\ and\ \citenamefont
  {Angelo}(2019)}]{fucci2019tripartite}%
  \BibitemOpen
  \bibfield  {author} {\bibinfo {author} {\bibfnamefont {D.~M.}\ \bibnamefont
  {Fucci}}\ and\ \bibinfo {author} {\bibfnamefont {R.~M.}\ \bibnamefont
  {Angelo}},\ }\bibfield  {title} {\bibinfo {title} {Tripartite realism-based
  quantum nonlocality},\ }\href {https://doi.org/10.1103/PhysRevA.100.062101}
  {\bibfield  {journal} {\bibinfo  {journal} {Phys. Rev. A}\ }\textbf {\bibinfo
  {volume} {100}},\ \bibinfo {pages} {062101} (\bibinfo {year}
  {2019})}\BibitemShut {NoStop}%
\bibitem [{\citenamefont {Freire}\ and\ \citenamefont
  {Angelo}(2019)}]{freire2019quantifying}%
  \BibitemOpen
  \bibfield  {author} {\bibinfo {author} {\bibfnamefont {I.~S.}\ \bibnamefont
  {Freire}}\ and\ \bibinfo {author} {\bibfnamefont {R.~M.}\ \bibnamefont
  {Angelo}},\ }\bibfield  {title} {\bibinfo {title} {Quantifying
  continuous-variable realism},\ }\href
  {https://doi.org/10.1103/PhysRevA.100.022105} {\bibfield  {journal} {\bibinfo
   {journal} {Phys. Rev. A}\ }\textbf {\bibinfo {volume} {100}},\ \bibinfo
  {pages} {022105} (\bibinfo {year} {2019})}\BibitemShut {NoStop}%
\bibitem [{\citenamefont {Costa}\ and\ \citenamefont
  {Angelo}(2020)}]{costa2020information}%
  \BibitemOpen
  \bibfield  {author} {\bibinfo {author} {\bibfnamefont {A.~C.~S.}\
  \bibnamefont {Costa}}\ and\ \bibinfo {author} {\bibfnamefont {R.~M.}\
  \bibnamefont {Angelo}},\ }\bibfield  {title} {\bibinfo {title}
  {Information-based approach towards a unified resource theory},\ }\href
  {https://doi.org/10.1007/s11128-020-02826-y} {\bibfield  {journal} {\bibinfo
  {journal} {Quantum Inf. Process.}\ }\textbf {\bibinfo {volume} {19}},\
  \bibinfo {pages} {325} (\bibinfo {year} {2020})}\BibitemShut {NoStop}%
\bibitem [{\citenamefont {Lustosa}\ \emph {et~al.}(2020)\citenamefont
  {Lustosa}, \citenamefont {Dieguez},\ and\ \citenamefont
  {da~Paz}}]{lustosa2020irrealism}%
  \BibitemOpen
  \bibfield  {author} {\bibinfo {author} {\bibfnamefont {F.~R.}\ \bibnamefont
  {Lustosa}}, \bibinfo {author} {\bibfnamefont {P.~R.}\ \bibnamefont
  {Dieguez}},\ and\ \bibinfo {author} {\bibfnamefont {I.~G.}\ \bibnamefont
  {da~Paz}},\ }\bibfield  {title} {\bibinfo {title} {Irrealism from fringe
  visibility in matter-wave double-slit interference with initial contractive
  states},\ }\href {https://doi.org/10.1103/PhysRevA.102.052205} {\bibfield
  {journal} {\bibinfo  {journal} {Phys. Rev. A}\ }\textbf {\bibinfo {volume}
  {102}},\ \bibinfo {pages} {052205} (\bibinfo {year} {2020})}\BibitemShut
  {NoStop}%
\bibitem [{\citenamefont {Gomes}\ \emph {et~al.}(2022)\citenamefont {Gomes},
  \citenamefont {Dieguez},\ and\ \citenamefont
  {Vasconcelos}}]{gomes2022realism}%
  \BibitemOpen
  \bibfield  {author} {\bibinfo {author} {\bibfnamefont {V.~S.}\ \bibnamefont
  {Gomes}}, \bibinfo {author} {\bibfnamefont {P.~R.}\ \bibnamefont {Dieguez}},\
  and\ \bibinfo {author} {\bibfnamefont {H.~M.}\ \bibnamefont {Vasconcelos}},\
  }\bibfield  {title} {\bibinfo {title} {Realism-based nonlocality: Invariance
  under local unitary operations and asymptotic decay for thermal correlated
  states},\ }\href {https://doi.org/10.1016/j.physa.2022.127568} {\bibfield
  {journal} {\bibinfo  {journal} {Physica A}\ }\textbf {\bibinfo {volume}
  {601}},\ \bibinfo {pages} {127568} (\bibinfo {year} {2022})}\BibitemShut
  {NoStop}%
\bibitem [{\citenamefont {Engelbert}\ and\ \citenamefont
  {Angelo}(2023)}]{Engelbert2023considerations}%
  \BibitemOpen
  \bibfield  {author} {\bibinfo {author} {\bibfnamefont {N.~G.}\ \bibnamefont
  {Engelbert}}\ and\ \bibinfo {author} {\bibfnamefont {R.~M.}\ \bibnamefont
  {Angelo}},\ }\bibfield  {title} {\bibinfo {title} {{Considerations on the
  Relativity of Quantum Irrealism}},\ }\href
  {https://doi.org/10.3390/e25040603} {\bibfield  {journal} {\bibinfo
  {journal} {Entropy}\ }\textbf {\bibinfo {volume} {25}},\ \bibinfo {pages}
  {603} (\bibinfo {year} {2023})}\BibitemShut {NoStop}%
\bibitem [{\citenamefont {Paiva}\ \emph {et~al.}(2023)\citenamefont {Paiva},
  \citenamefont {Dieguez}, \citenamefont {Angelo},\ and\ \citenamefont
  {Cohen}}]{paiva2023coherence}%
  \BibitemOpen
  \bibfield  {author} {\bibinfo {author} {\bibfnamefont {I.~L.}\ \bibnamefont
  {Paiva}}, \bibinfo {author} {\bibfnamefont {P.~R.}\ \bibnamefont {Dieguez}},
  \bibinfo {author} {\bibfnamefont {R.~M.}\ \bibnamefont {Angelo}},\ and\
  \bibinfo {author} {\bibfnamefont {E.}~\bibnamefont {Cohen}},\ }\bibfield
  {title} {\bibinfo {title} {Coherence and realism in the aharonov-bohm
  effect},\ }\href
  {https://doi.org/https://doi.org/10.1103/PhysRevA.107.032213} {\bibfield
  {journal} {\bibinfo  {journal} {Phys. Rev. A}\ }\textbf {\bibinfo {volume}
  {107}},\ \bibinfo {pages} {032213} (\bibinfo {year} {2023})}\BibitemShut
  {NoStop}%
\bibitem [{\citenamefont {Caetano}\ and\ \citenamefont
  {Angelo}(2024)}]{Caetano2024quantum}%
  \BibitemOpen
  \bibfield  {author} {\bibinfo {author} {\bibfnamefont {R.~A.}\ \bibnamefont
  {Caetano}}\ and\ \bibinfo {author} {\bibfnamefont {R.~M.}\ \bibnamefont
  {Angelo}},\ }\bibfield  {title} {\bibinfo {title} {{Quantum violations of
  joint reality}},\ }\href {https://doi.org/10.1103/PhysRevA.110.032214}
  {\bibfield  {journal} {\bibinfo  {journal} {Phys. Rev. A}\ }\textbf {\bibinfo
  {volume} {110}},\ \bibinfo {pages} {032214} (\bibinfo {year}
  {2024})}\BibitemShut {NoStop}%
\bibitem [{\citenamefont {Orthey~Jr}\ and\ \citenamefont
  {Streltsov}(2024)}]{orthey2024geometric}%
  \BibitemOpen
  \bibfield  {author} {\bibinfo {author} {\bibfnamefont {A.~C.}\ \bibnamefont
  {Orthey~Jr}}\ and\ \bibinfo {author} {\bibfnamefont {A.}~\bibnamefont
  {Streltsov}},\ }\bibfield  {title} {\bibinfo {title} {Geometric monotones of
  violations of quantum realism},\ }\href@noop {} {\bibfield  {journal}
  {\bibinfo  {journal} {arXiv:2412.11633}\ } (\bibinfo {year}
  {2024})}\BibitemShut {NoStop}%
\bibitem [{\citenamefont {Dieguez}\ \emph {et~al.}(2022)\citenamefont
  {Dieguez}, \citenamefont {Guimar{\~a}es}, \citenamefont {Peterson},
  \citenamefont {Angelo},\ and\ \citenamefont
  {Serra}}]{dieguez2022experimental}%
  \BibitemOpen
  \bibfield  {author} {\bibinfo {author} {\bibfnamefont {P.~R.}\ \bibnamefont
  {Dieguez}}, \bibinfo {author} {\bibfnamefont {J.~R.}\ \bibnamefont
  {Guimar{\~a}es}}, \bibinfo {author} {\bibfnamefont {J.~P.~S.}\ \bibnamefont
  {Peterson}}, \bibinfo {author} {\bibfnamefont {R.~M.}\ \bibnamefont
  {Angelo}},\ and\ \bibinfo {author} {\bibfnamefont {R.~M.}\ \bibnamefont
  {Serra}},\ }\bibfield  {title} {\bibinfo {title} {Experimental assessment of
  physical realism in a quantum-controlled device},\ }\href
  {https://doi.org/10.1038/s42005-022-00828-z} {\bibfield  {journal} {\bibinfo
  {journal} {Commun. Phys.}\ }\textbf {\bibinfo {volume} {5}},\ \bibinfo
  {pages} {82} (\bibinfo {year} {2022})}\BibitemShut {NoStop}%
\bibitem [{\citenamefont {Basso}\ and\ \citenamefont
  {Maziero}(2022)}]{basso2022reality}%
  \BibitemOpen
  \bibfield  {author} {\bibinfo {author} {\bibfnamefont {M.~L.~W.}\
  \bibnamefont {Basso}}\ and\ \bibinfo {author} {\bibfnamefont
  {J.}~\bibnamefont {Maziero}},\ }\bibfield  {title} {\bibinfo {title} {Reality
  variation under monitoring with weak measurements},\ }\href
  {https://doi.org/10.1007/s11128-022-03618-2} {\bibfield  {journal} {\bibinfo
  {journal} {Quantum Inf. Process.}\ }\textbf {\bibinfo {volume} {21}},\
  \bibinfo {pages} {255} (\bibinfo {year} {2022})}\BibitemShut {NoStop}%
\bibitem [{\citenamefont {Dieguez}\ and\ \citenamefont
  {Angelo}(2018{\natexlab{b}})}]{dieguez2018weak}%
  \BibitemOpen
  \bibfield  {author} {\bibinfo {author} {\bibfnamefont {P.~R.}\ \bibnamefont
  {Dieguez}}\ and\ \bibinfo {author} {\bibfnamefont {R.~M.}\ \bibnamefont
  {Angelo}},\ }\bibfield  {title} {\bibinfo {title} {Weak quantum discord},\
  }\href {https://doi.org/10.1007/s11128-018-1963-1} {\bibfield  {journal}
  {\bibinfo  {journal} {Quantum Inf. Process.}\ }\textbf {\bibinfo {volume}
  {17}},\ \bibinfo {pages} {194} (\bibinfo {year}
  {2018}{\natexlab{b}})}\BibitemShut {NoStop}%
\bibitem [{\citenamefont {Ara{\ifmmode\acute{u}\else\'{u}\fi}jo}\ \emph
  {et~al.}(2024)\citenamefont {Ara{\ifmmode\acute{u}\else\'{u}\fi}jo},
  \citenamefont {Starke}, \citenamefont {Coelho}, \citenamefont {Maziero},
  \citenamefont {Aguilar},\ and\ \citenamefont {Angelo}}]{Araujo2024quantum}%
  \BibitemOpen
  \bibfield  {author} {\bibinfo {author} {\bibfnamefont {J.~S.}\ \bibnamefont
  {Ara{\ifmmode\acute{u}\else\'{u}\fi}jo}}, \bibinfo {author} {\bibfnamefont
  {D.~S.}\ \bibnamefont {Starke}}, \bibinfo {author} {\bibfnamefont {A.~S.}\
  \bibnamefont {Coelho}}, \bibinfo {author} {\bibfnamefont {J.}~\bibnamefont
  {Maziero}}, \bibinfo {author} {\bibfnamefont {G.~H.}\ \bibnamefont
  {Aguilar}},\ and\ \bibinfo {author} {\bibfnamefont {R.~M.}\ \bibnamefont
  {Angelo}},\ }\bibfield  {title} {\bibinfo {title} {{Quantum Reality Erasure
  with Spacelike-Separated Operations}},\ }\href@noop {} {\bibfield  {journal}
  {\bibinfo  {journal} {arXiv:2407.07185}\ } (\bibinfo {year}
  {2024})}\BibitemShut {NoStop}%
\bibitem [{\citenamefont {Molitor}\ \emph {et~al.}(2024)\citenamefont
  {Molitor}, \citenamefont {Malavazi}, \citenamefont
  {Baldij{\ifmmode\tilde{a}\else\~{a}\fi}o}, \citenamefont {Orthey},
  \citenamefont {Paiva},\ and\ \citenamefont {Dieguez}}]{Molitor2024Nov}%
  \BibitemOpen
  \bibfield  {author} {\bibinfo {author} {\bibfnamefont {O.~A.~D.}\
  \bibnamefont {Molitor}}, \bibinfo {author} {\bibfnamefont {A.~H.~A.}\
  \bibnamefont {Malavazi}}, \bibinfo {author} {\bibfnamefont {R.~D.}\
  \bibnamefont {Baldij{\ifmmode\tilde{a}\else\~{a}\fi}o}}, \bibinfo {author}
  {\bibfnamefont {A.~C.}\ \bibnamefont {Orthey}}, \bibinfo {author}
  {\bibfnamefont {I.~L.}\ \bibnamefont {Paiva}},\ and\ \bibinfo {author}
  {\bibfnamefont {P.~R.}\ \bibnamefont {Dieguez}},\ }\bibfield  {title}
  {\bibinfo {title} {{Quantum switch instabilities with an open control}},\
  }\href {https://doi.org/10.1038/s42005-024-01843-y} {\bibfield  {journal}
  {\bibinfo  {journal} {Commun. Phys.}\ }\textbf {\bibinfo {volume} {7}},\
  \bibinfo {pages} {1} (\bibinfo {year} {2024})}\BibitemShut {NoStop}%
\bibitem [{\citenamefont {Nielsen}\ and\ \citenamefont
  {Chuang}(2010)}]{nielsen2010quantum}%
  \BibitemOpen
  \bibfield  {author} {\bibinfo {author} {\bibfnamefont {M.~A.}\ \bibnamefont
  {Nielsen}}\ and\ \bibinfo {author} {\bibfnamefont {I.~L.}\ \bibnamefont
  {Chuang}},\ }\href {https://doi.org/10.1017/CBO9780511976667} {\emph
  {\bibinfo {title} {Quantum Computation and Quantum Information: 10th
  Anniversary Edition}}}\ (\bibinfo  {publisher} {Cambridge University Press},\
  \bibinfo {year} {2010})\BibitemShut {NoStop}%
\bibitem [{\citenamefont {Santos}\ \emph {et~al.}(2022)\citenamefont {Santos},
  \citenamefont {Saha}, \citenamefont {Baccari},\ and\ \citenamefont
  {Augusiak}}]{santos2022scalable}%
  \BibitemOpen
  \bibfield  {author} {\bibinfo {author} {\bibfnamefont {R.}~\bibnamefont
  {Santos}}, \bibinfo {author} {\bibfnamefont {D.}~\bibnamefont {Saha}},
  \bibinfo {author} {\bibfnamefont {F.}~\bibnamefont {Baccari}},\ and\ \bibinfo
  {author} {\bibfnamefont {R.}~\bibnamefont {Augusiak}},\ }\bibfield  {title}
  {\bibinfo {title} {Scalable bell inequalities for graph states of arbitrary
  prime local dimension and self-testing},\ }\href
  {https://doi.org/10.1088/1367-2630/acd9e3} {\bibfield  {journal} {\bibinfo
  {journal} {New J. Phys.}\ }\textbf {\bibinfo {volume} {25}},\ \bibinfo
  {pages} {063018} (\bibinfo {year} {2022})}\BibitemShut {NoStop}%
\bibitem [{\citenamefont {Bandyopadhyay}\ \emph {et~al.}(2002)\citenamefont
  {Bandyopadhyay}, \citenamefont {Boykin}, \citenamefont {Roychowdhury},\ and\
  \citenamefont {Vatan}}]{bandyopadhyay2002new}%
  \BibitemOpen
  \bibfield  {author} {\bibinfo {author} {\bibnamefont {Bandyopadhyay}},
  \bibinfo {author} {\bibnamefont {Boykin}}, \bibinfo {author} {\bibnamefont
  {Roychowdhury}},\ and\ \bibinfo {author} {\bibnamefont {Vatan}},\ }\bibfield
  {title} {\bibinfo {title} {A new proof for the existence of mutually unbiased
  bases},\ }\href {https://doi.org/10.1007/s00453-002-0980-7} {\bibfield
  {journal} {\bibinfo  {journal} {Algorithmica}\ }\textbf {\bibinfo {volume}
  {34}},\ \bibinfo {pages} {512} (\bibinfo {year} {2002})}\BibitemShut
  {NoStop}%
\bibitem [{\citenamefont {Kretschmann}\ \emph {et~al.}(2008)\citenamefont
  {Kretschmann}, \citenamefont {Schlingemann},\ and\ \citenamefont
  {Werner}}]{kretschmann2008}%
  \BibitemOpen
  \bibfield  {author} {\bibinfo {author} {\bibfnamefont {D.}~\bibnamefont
  {Kretschmann}}, \bibinfo {author} {\bibfnamefont {D.}~\bibnamefont
  {Schlingemann}},\ and\ \bibinfo {author} {\bibfnamefont {R.~F.}\ \bibnamefont
  {Werner}},\ }\bibfield  {title} {\bibinfo {title} {The
  information-disturbance tradeoff and the continuity of stinespring's
  representation},\ }\href {https://doi.org/10.1109/TIT.2008.917696} {\bibfield
   {journal} {\bibinfo  {journal} {IEEE Transactions on Information Theory}\
  }\textbf {\bibinfo {volume} {54}},\ \bibinfo {pages} {1708} (\bibinfo {year}
  {2008})}\BibitemShut {NoStop}%
\bibitem [{\citenamefont {Oliveira}\ \emph {et~al.}(2019)\citenamefont
  {Oliveira}, \citenamefont {de~Paula},\ and\ \citenamefont
  {Drumond}}]{oliveira2019quantum}%
  \BibitemOpen
  \bibfield  {author} {\bibinfo {author} {\bibfnamefont {S.~M.}\ \bibnamefont
  {Oliveira}}, \bibinfo {author} {\bibfnamefont {A.~L.}\ \bibnamefont
  {de~Paula}},\ and\ \bibinfo {author} {\bibfnamefont {R.~C.}\ \bibnamefont
  {Drumond}},\ }\bibfield  {title} {\bibinfo {title} {Quantum darwinism and
  non-markovianity in a model of quantum harmonic oscillators},\ }\href
  {https://doi.org/10.1103/PhysRevA.100.052110} {\bibfield  {journal} {\bibinfo
   {journal} {Phys. Rev. A}\ }\textbf {\bibinfo {volume} {100}},\ \bibinfo
  {pages} {052110} (\bibinfo {year} {2019})}\BibitemShut {NoStop}%
\end{thebibliography}%

\appendix

\section{Qubit case --- the c-maybe gate}\label{qubit_cases}

Let us consider the c-maybe gate \cite{touil2022eavesdropping}
\be
U_\oslash\coloneqq \ket{0}\bra{0}\otimes\mbb{1}+\ket{1}\bra{1}\otimes\sigma_\theta=\sum_{k=0}^{1} P_k\otimes \sigma_\theta^k,
\ee
where $\sigma_\theta\coloneqq \cos\theta\sigma_x+\sin\theta\sigma_z$ acts over $\mc{H}_{\mc{E}_1}$ and $P_k=A_k\otimes\mbb{1}_\mc{B}$ are projectors in $\mc{H}_\mc{S=AB}$ s.t. $A=\sum_k a_k A_k$ is the observable of interest in $\mc{H_A}$, $A_kA_{k'}=\delta_{kk'}A_k$, and $\sum_k A_k=\mbb{1}_\mc{A}$. It can be checked that 
\be
\bra{0}\sigma_\theta^i \sigma_\theta^j\ket{0}=\sin\theta+\delta_{i,j}(1-\sin\theta),
\ee
for $i,j=0,1$. If the initial state of the global system is given by two qubits in the state $\rho_{\mc{S}\mc{E}_1}=\rho_\mc{AB}\otimes\ket{0}\bra{0}_{\mc{E}_1}$, then the evolved state is
\be
\rho_{\mc{S}\mc{E}_1}'\coloneqq U_\oslash \rho_{\mc{S}\mc{E}_1} U_\oslash^\dagger=\sum_{i,j=0}^{1} P_i\rho_\mc{AB} P_j\otimes \sigma_\theta^i\ket{0}\bra{0}_{\mc{E}_1}(\sigma_\theta^j )^\dagger.
\ee
If we trace out the environmental qubit, we obtain
\begin{align}
        \Tr_{\mc{E}_1} \left(\rho_{\mc{S}\mc{E}_1}'\right) &=\sum_{i,j=0}^{1} P_i\rho_\mc{AB} P_j \bra{0} (\sigma_\theta^j)^\dagger \sigma_\theta^i \ket{0}_{\mc{E}_1} \nonumber\\
        &= \sum_{i,j=0}^{1} P_i\rho_\mc{AB} P_j [\sin\theta+\delta_{i,j}(1-\sin\theta)],
\end{align}
which can be written as
\begin{align}
 \Tr_{\mc{E}_1} \left(\rho_{\mc{S}\mc{E}_1}'\right) &= \sin\theta\sum_{i,j=0}^{1} P_i \rho_\mc{AB} P_j +(1-\sin\theta) \sum_{i=0}^1 P_i \rho_\mc{AB} P_i \\ \nonumber& = \sin\theta\rho_\mc{AB} + (1-\sin\theta) \Phi_A (\rho_\mc{AB}). 
\end{align}
By the definition of a monitoring map
\be\label{monitoring_SM}
\mc{M}_A^\epsilon(\varrho)\coloneqq (1-\epsilon)\rho+\epsilon\Phi_A(\varrho),
\ee
we have that
\be
\Tr_{\mc{E}_1} \left(\rho_{\mc{S}\mc{E}_1}'\right) = \mc{M}_A^{1-\sin\theta}(\rho_\mc{AB}).
\ee
As we can see here, the c-maybe operator written in the eigenbasis of $A$ is precisely the unitary evolution, up to local isometries, that results in a monitoring of $A$ with intensity $\epsilon=1-\sin\theta$ after we trace out the environment.

For comparison, let us consider the noisy CNOT gate $U_{\mc{SE}_i}$ given by Eq. \eqref{U} for dimension $d=2$, i.e. $U_{\mc{SE}_i}=\ket{0}\bra{0}\otimes\mbb{1}+\ket{1}\bra{1}\otimes T$,
where $T =-\sin\theta\sigma_y+\cos\theta\sigma_z$ is obtained from Eq. \eqref{operatorT}. The resulting monitoring map will have intensity $\epsilon=1-\cos\theta$, reproducing the effects of the c-maybe gate from the work of Touil et al. \cite{touil2022eavesdropping}, up to a $\pi/2$ phase on $\theta$.

\section{Properties of operator $T$}\label{properties}

One identity of great importance is the following:
\be\label{identity_delta}
\sum_{k\, =\, 0}^{d-1}\omega^{(a-b)k}=d\delta_{a,b},
\ee
where $a$ and $b$ are any integers and 
\be
\delta_{a,b}\coloneqq \begin{cases}
1 &\textrm{for } a = b \mod d;\\
0 &\textrm{for } a \neq b \mod d.
\end{cases}
\ee

Now, consider the unitary observable $T$ given by
\be\label{operatorT_SM}
T\coloneqq\sum_{k\, =\, 0}^{d-1}\alpha_kZX^k,\qquad \alpha_k=\dfrac{1}{d}\sum_{l\, =\, 0}^{d-1}\omega^{lk}e^{\mbb{i}\phi_l},
\ee
with condition $\sum_{l=0}^{d-1}\phi_l=0$. Let us prove three essential properties of the operator $T$.
\begin{prop}\label{prop_unitary}
The operator $T$ is unitary.
\end{prop}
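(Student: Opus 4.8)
The plan is to establish unitarity by the most direct route: show that $TT^\dagger=\mbb{1}_{\mc H}$. Since $T$ acts on a finite-dimensional Hilbert space, a one-sided inverse is automatically two-sided, so this suffices. First I would write $T^\dagger=\sum_k\bar\alpha_k\,(ZX^k)^\dagger=\sum_k\bar\alpha_k\,X^{-k}Z^{-1}$, using $X^\dagger=X^{-1}$ and $Z^\dagger=Z^{-1}$, and multiply out to obtain
\be
TT^\dagger=\sum_{k,k'\,=\,0}^{d-1}\alpha_k\bar\alpha_{k'}\,ZX^{k-k'}Z^{-1}.
\ee
The one genuinely algebraic move is to push $Z$ through the power of $X$ with the Weyl relation $Z^lX^k=\omega^{lk}X^kZ^l$ recalled in the main text, which for $l=1$ gives $ZX^mZ^{-1}=\omega^mX^m$. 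Substituting and regrouping by the exponent $m=k-k'$ (read modulo $d$), $TT^\dagger$ collapses to $\sum_{m}\omega^m\big(\sum_{k'}\alpha_{k'+m}\bar\alpha_{k'}\big)X^m$.

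The remainder is bookkeeping. Inserting the explicit coefficients $\alpha_k=\tfrac1d\sum_l\omega^{lk}e^{\mbb{i}\phi_l}$ into $\sum_{k'}\alpha_{k'+m}\bar\alpha_{k'}$ produces a triple sum over $k'$ and two phase labels $l,l'$; applying the identity~\eqref{identity_delta}, $\sum_k\omega^{(a-b)k}=d\delta_{a,b}$, to the sum over $k'$ forces $l=l'$, the phase factors cancel, and one is left with $\tfrac1d\sum_l\omega^{lm}$, which by a second application of~\eqref{identity_delta} equals $\delta_{m,0}$. Hence the coefficient of $X^m$ in $TT^\dagger$ is $\omega^m\delta_{m,0}=\delta_{m,0}$, so only $m=0$ survives and $TT^\dagger=X^0=\mbb{1}_{\mc H}$.

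I would note in passing that the constraint $\sum_l\phi_l=0$ plays no part in unitarity itself (it only normalizes $\det T$), nor does the derived identity $\sum_k|\alpha_k|^2=1$; what does the work is purely the discrete-Fourier structure of the $\alpha_k$. An equivalent but slightly more conceptual argument, which I might mention as a remark, is to factor $T=Z\big(\sum_k\alpha_kX^k\big)$ and observe that $\sum_k\alpha_kX^k$ acts on the $X$-eigenvector $\ket{v_l}=\sum_n\omega^{ln}\ket{n}$ (with $X\ket{v_l}=\omega^{-l}\ket{v_l}$) as multiplication by $e^{\mbb{i}\phi_l}$, hence is unitary, so $T$ is a product of two unitaries. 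There is no real obstacle here; the only thing to be careful about is treating all index shifts and the Kronecker deltas modulo $d$, exactly as in~\eqref{identity_delta}.
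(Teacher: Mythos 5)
Your proof is correct and follows essentially the same route as the paper: expand $TT^\dagger$, push $Z$ through powers of $X$ via the Weyl relation, and annihilate the coefficient autocorrelation with the identity~\eqref{identity_delta}, the only cosmetic difference being that the paper first conjugates by the Fourier matrix $V$ and treats the diagonal ($\sum_k|\alpha_k|^2=1$) and off-diagonal ($\sum_q\alpha_{q-n}\alpha_q^*=0$) terms separately, which your grouping by the power $X^m$ handles in one stroke. One small quibble: you are right that $\sum_l\phi_l=0$ is irrelevant to unitarity, but in the paper its role is more than normalizing $\det T$ --- it is precisely what guarantees $T^d=\mbb{1}$ in the subsequent proposition, i.e.\ that $T$ generates a genuine $d$-outcome unitary observable.
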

\begin{proof}
To make the calculations simpler, we can apply a unitary transformation
\be
V\coloneqq \dfrac{1}{\sqrt{d}}\sum_{i,j}\omega^{ij}\ket{i}\bra{j}
\ee
to $T$ such that $VTV^\dagger\eqqcolon \overline{T}$. Indeed, $T$ is unitary \textit{iff}
\be\label{overlineT}
\overline{T}=\sum_{k=0}^{d-1}\alpha_k XZ^{-k}
\ee
is unitary.

Explicitly,
\be
\overline{T}\overline{T}^\dagger=\left(\sum_{k\, =\, 0}^{d-1}\alpha_k ZX^k \right)\left(\sum_{q\, =\, 0}^{d-1}\alpha_q^*X^{-q}Z^{-1} \right).
\ee
By performing the products, we obtain
\be\label{overlineTTdagger}
\overline{T}\overline{T}^\dagger= \sum_{k\, =\, 0}^{d-1}|\alpha_k|^2\mbb{1}+\sum_{\substack{k,q\, =\, 0 \\ k\neq q}}^{d-1}\alpha_k\alpha_q^* XZ^{q-k}X^{-1}.
\ee
Let us start by developing the first sum in the r.h.s. of the above expression by using identity \eqref{identity_delta}:
\begin{align}\label{sum_alpha}
\begin{split}
\sum_{k\, =\, 0}^{d-1}|\alpha_k|^2&=\sum_{k\, =\, 0}^{d-1}\dfrac{1}{d^2}\sum_{k=0}^{d-1}\sum_{l,l'=0}^{d-1} \omega^{k(l'-l)}e^{\mbb{i}(\phi_l-\phi_{l'})}\\ &=\dfrac{1}{d}\sum_{l,l'=0}^{d-1}\delta_{ll'}e^{\mbb{i}(\phi_l-\phi_{l'})}=1.
\end{split}
\end{align}
For simplicity, let us denote $n\coloneqq q-k$ and rewrite \eqref{overlineTTdagger} in the following form
\be\label{TT1}
\overline{T}\overline{T}^\dagger=\mbb{1}+\sum_{n=1}^{d-1}\sum_{q\, =\, 0}^{d-1}\alpha_{q-n}\alpha_q^*XZ^nX^{-1}.
\ee
The remaining sum in the r.h.s. of \eqref{TT1} will vanish. Indeed, by implementing identity \eqref{identity_delta} multiple times, we obtain
\be
\sum_{q\, =\, 0}^{d-1} \alpha_{q-n}\alpha_q^*=\dfrac{1}{d^2}\sum_{q\, =\, 0}^{d-1} \sum_{m_1,m_2=0}^{d-1} \omega^{q(m_2-m_1)}e^{m_1n}e^{\mbb{i}(\phi_{m_1}-\phi_{m_2})}=0.
\ee
\end{proof}

\begin{prop}\label{prop_expectation}
Given an integer $j$, the operator $T$ satisfies
\be
\bra{0} T^j\ket{0}_{\mc{E}_i}= \dfrac{1}{d}\sum_{q=0}^{d-1}\exp\left(\mbb{i}\sum_{m=0}^{[j]_d -1} \phi_{[q+m]_d} \right),
\ee
where $[j]_d$ represents $j\mod d$. In particular, if $j=0$ or $j=d$, then $\bra{0} T^j\ket{0}_{\mc{E}_i}=1$.
\end{prop}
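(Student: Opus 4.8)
The plan is to expose the phase-weighted cyclic-shift structure hidden in $T$, from which its powers — and hence the matrix element — can be read off directly. Recall from the proof of Proposition~\ref{prop_unitary} that it suffices to work with $\overline{T}\coloneqq VTV^\dagger = \sum_k \alpha_k XZ^{-k}$. Substituting $\alpha_k = \tfrac1d\sum_l\omega^{lk}e^{\mbb{i}\phi_l}$ and collapsing the $k$-sum with identity~\eqref{identity_delta}, now in the form $\sum_k\omega^{lk}Z^{-k} = d\,\ket{l}\bra{l}$, one obtains the transparent expression
\[
\overline{T} = \sum_{l=0}^{d-1} e^{\mbb{i}\phi_l}\,\ket{l+1}\bra{l},
\]
all labels read mod $d$; that is, $\overline{T}$ shifts the computational basis cyclically while collecting the phase $e^{\mbb{i}\phi_l}$ at step $l$. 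Moreover $V\ket{0} = \tfrac1{\sqrt d}\sum_i\ket{i}\eqqcolon\ket{+}$, so that $\bra{0}T^j\ket{0}_{\mc{E}_i} = \bra{+}\overline{T}^{\,j}\ket{+}$.

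Next I would iterate the shift. A one-line induction on $j\geqslant 0$ gives
\[
\overline{T}^{\,j} = \sum_{l=0}^{d-1} \exp\left(\mbb{i}\sum_{m=0}^{j-1}\phi_{l+m}\right)\ket{l+j}\bra{l},
\]
the phase accumulating over the window of $j$ consecutive labels swept by successive shifts. For $j=d$ that window is a full residue system and the constraint $\sum_{l=0}^{d-1}\phi_l = 0$ forces $\overline{T}^{\,d}=\mbb{1}$, so the formula — and the whole statement — is periodic in $j$ with period $d$, consistent with $T$ being the unitary observable of a $d$-outcome measurement. The point requiring care is then the reduction modulo $d$ for a general $j$: writing $j = sd + [j]_d$, the window $\{l,l+1,\ldots,l+j-1\}$ (mod $d$) is $s$ complete residue systems together with $[j]_d$ leftover consecutive labels; since each complete block contributes nothing to the exponent, it collapses to $\sum_{m=0}^{[j]_d-1}\phi_{[l+m]_d}$ while $\ket{l+j}=\ket{l+[j]_d}$ automatically.

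Finally, sandwiching between $\ket{+}$ is immediate, since $\langle +|\,l+j\rangle = \langle l|+\rangle = 1/\sqrt d$ for every $l$, so
\[
\bra{0}T^j\ket{0}_{\mc{E}_i} = \frac1d\sum_{l=0}^{d-1}\exp\left(\mbb{i}\sum_{m=0}^{[j]_d-1}\phi_{[l+m]_d}\right),
\]
which is the claimed identity after renaming $l\to q$. For $j=0$ or $j=d$ one has $[j]_d=0$, so the inner sum is empty, every summand equals $1/d$, and the total is $1$.

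I expect the modular bookkeeping — the empty-sum convention when $[j]_d=0$, and checking that the wrap-around of the phase window genuinely cancels because $\sum_l\phi_l=0$ — to be the only delicate step; everything else is a direct application of identity~\eqref{identity_delta} and a short induction. One could equally avoid the conjugation by $V$ and diagonalize $T$ itself: the same manipulation yields $T = \sum_l e^{\mbb{i}\phi_l}\ket{f_{l+1}}\bra{f_l}$ in the Fourier basis $\ket{f_l} = \tfrac1{\sqrt d}\sum_n\omega^{ln}\ket{n}$, with $\langle 0|f_m\rangle = 1/\sqrt d$, and the two routes coincide.
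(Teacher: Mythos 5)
Your proof is correct, and it takes a genuinely different route from the paper's. The paper works directly with $T^j$: it pushes all $Z$'s to the left via $X^kZ=\omega^{-k}ZX^k$, expands the resulting product over a multi-index $(k_0,\dots,k_{j-1})$ constrained by $\sum_l k_l\equiv 0 \pmod d$, and then collapses the sums with the orthogonality identity \eqref{identity_delta} to reach $\frac{1}{d}\sum_q e^{\mbb{i}\phi_q}\prod_{l=1}^{j-1}e^{\mbb{i}\phi_{q+l}}$. You instead change basis first, showing that in the Fourier-conjugated picture $T$ is nothing but a phase-decorated cyclic shift, $\overline{T}=\sum_l e^{\mbb{i}\phi_l}\ket{l+1}\bra{l}$ (equivalently $T\ket{f_l}=e^{\mbb{i}\phi_l}\ket{f_{l+1}}$, which I checked directly), after which $\overline{T}^{\,j}$ is explicit by a one-line induction and the matrix element follows because $\ket{0}$ corresponds to the uniform superposition $\ket{+}$. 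Your bookkeeping is sound: the $k$-sum collapse, the induction, the reduction of the phase window modulo $d$ using $\sum_l\phi_l=0$, and the overlaps $\langle +|l+[j]_d\rangle=\langle l|+\rangle=1/\sqrt d$ all check out, and the periodicity $\overline{T}^{\,d}=\mbb{1}$ also covers negative $j$. What your route buys is structural transparency: unitarity (Proposition \ref{prop_unitary}) and $T^d=\mbb{1}$ become immediate corollaries of the shift form, whereas the paper proves them by separate multi-index computations; what the paper's route buys is that it never leaves the computational basis. One cosmetic caveat you inherit from the paper: with $V=\frac{1}{\sqrt d}\sum_{i,j}\omega^{ij}\ket{i}\bra{j}$ it is $V^\dagger T V$, not $VTV^\dagger$, that equals $\sum_k\alpha_k XZ^{-k}$; this is immaterial for your argument since $V\ket{0}=V^\dagger\ket{0}=\ket{+}$, which is all you use.
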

\begin{proof}
From definition \eqref{operatorT_SM}, we have
\be
T^j=\left(\sum_{k\, =\, 0}^{d-1} \alpha_k ZX^k \right)^j.
\ee
Since $X^kZ=\omega^{-k} ZX^k$, we can rewrite the r.h.s. of the above equation as
\be\label{Tj}
T^j=Z^j\prod_{l\, =\, 0}^{j-1}\left(\sum_{k_l\, =\, 0}^{d-1}\alpha_{k_l}\omega^{-lk_l}X^{k_l} \right).
\ee
Now, we can suppose that $j>1$ and take expected value $\bra{0}T^j\ket{0}_{\mc{E}_i}$. After taking the sum and the product, the r.h.s will be a sum of terms consisting of a complex number multiplying $X^{\kappa}$, where
\be
\kappa \coloneqq  \sum_{l\, =\, 0}^{j-1}k_l\, .
\ee
The only terms that will have a contribution to $\bra{0}T^j\ket{0}_{\mc{E}_i}$ are the terms for which 
\be\label{sum0}
\kappa  =\, 0
\ee
which results in  
\be\label{sum_prod_omega_alpha}
\bra{0}T^j\ket{0}_{\mc{E}_i}=\sum_{\substack{k_0,\ldots,k_{j-1}\, =\, 0\\ \kappa\, =\, 0}}^{d-1}\left(\prod_{l\, =\, 0}^{j-1}\omega^{-lk_l}\alpha_{k_l}\right).
\ee
By condition \eqref{sum0}, we can write
\be\label{expected_1}
\bra{0}T^j\ket{0}_{\mc{E}_i}=\sum_{k_1,\ldots,k_{j-1}\, =\, 0}^{d-1}\left(\alpha_{-(k_1+\ldots+k_{j-1})}\prod_{l=1}^{j-1}\omega^{-lk_l}\alpha_{k_l}\right).
\ee

From \eqref{operatorT_SM}, we have:
\be\label{alpha_k_open}
\alpha_{k_l}=\dfrac{1}{d}\sum_{q'=0}^{d-1}\omega^{q'k_l}e^{\mbb{i}\phi_{q'}};
\ee
\be
\alpha_{-(k_1+\ldots+k_{j-1})}=\dfrac{1}{d}\sum_{q=0}^{d-1}\omega^{-q(k_1+\ldots+k_{j-1})}e^{\mbb{i}\phi_q}.
\ee
By implementing the above equations to \eqref{expected_1} we obtain
    \begin{align}
        &\bra{0}T^j\ket{0}_{\mc{E}_i}\nonumber\\
        &= \sum_{k_1,\ldots, k_{j-1}=0}^{d-1} \dfrac{1}{d} \sum_{q=0}^{d-1} \Bigg[\omega^{-q(k_1+\ldots+ k_{j-1})}e^{\mbb{i}\phi_q} \prod_{l=1}^{j-1}\left(\dfrac{1}{d} \sum_{q'=0}^{d-1}\omega^{k_l(q'-l)}e^{\mbb{i}\phi_{q'}}\right)\Bigg] \notag\\
        &= \dfrac{1}{d^j}\sum_{k_1,\ldots, k_{j-1}=0}^{d-1} \sum_{q=0}^{d-1} \left[e^{\mbb{i}\phi_q}\prod_{l=1}^{j-1}\left(\sum_{q'=0}^{d-1}\omega^{k_l(q'-q-l)}e^{\mbb{i}\phi_{q'}}\right)\right].
    \end{align}
From identity \eqref{identity_delta}, we obtain
    \begin{align}
        \bra{0}T^j\ket{0}_{\mc{E}_i} &= \dfrac{1}{d}\sum_{q=0}^{d-1} e^{\mbb{i}\phi_q}\prod_{l=1}^{j-1}\sum_{q'=0}^{d-1}\delta_{0,q'-q-l}e^{\mbb{i}\phi_{q'}},\nonumber\\
        &= \dfrac{1}{d}\sum_{q=0}^{d-1}e^{\mbb{i}\phi_q}\prod_{l=1}^{j-1}e^{\mbb{i}\phi_{q+l}}.
    \end{align}
After we develop the product in the above equation, we obtain the desired result
\be
\bra{0}T^j\ket{0}_{\mc{E}_i} = \dfrac{1}{d}\sum_{q=0}^{d-1} \exp\left(\mbb{i}\sum_{m=0}^{[j]_d-1}\phi_{[q+m]_d} \right).
\ee

\end{proof}

\begin{prop}
    The operator $T$ satisfies $T^d=\mbb{1}$.
\end{prop}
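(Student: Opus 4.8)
The plan is to reuse the unitary conjugation that already appeared in the proof of Proposition~\ref{prop_unitary}. Recall that $V\coloneqq\frac{1}{\sqrt{d}}\sum_{i,j}\omega^{ij}\ket{i}\bra{j}$ sends $T$ to $\overline{T}=VTV^\dagger=\sum_{k=0}^{d-1}\alpha_k XZ^{-k}$, as in \eqref{overlineT}. Since $V$ is unitary, $\overline{T}^d=(VTV^\dagger)^d=VT^dV^\dagger$, so $T^d=\mbb{1}$ if and only if $\overline{T}^d=\mbb{1}$; it therefore suffices to prove the latter, and $\overline{T}$ is much easier to handle than $T$ because in the computational basis it is a weighted cyclic shift.

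First I would compute the action of $\overline{T}$ on a basis vector $\ket{m}$. Using $Z^{-k}\ket{m}=\omega^{-km}\ket{m}$ and $X\ket{m}=\ket{m+1}$ gives $\overline{T}\ket{m}=\big(\sum_{k=0}^{d-1}\alpha_k\omega^{-km}\big)\ket{m+1}$. Substituting $\alpha_k=\frac{1}{d}\sum_l\omega^{lk}e^{\mbb{i}\phi_l}$ from \eqref{operatorT_SM} and applying identity \eqref{identity_delta} to the sum over $k$ collapses the bracket to $e^{\mbb{i}\phi_m}$ (all indices understood mod $d$). Hence $\overline{T}$ is exactly a phased shift, $\overline{T}\ket{m}=e^{\mbb{i}\phi_{[m]_d}}\ket{[m+1]_d}$.

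The last step is to iterate this relation $d$ times. Telescoping the accumulated phases yields $\overline{T}^d\ket{m}=\exp\!\big(\mbb{i}\sum_{j=0}^{d-1}\phi_{[m+j]_d}\big)\ket{[m+d]_d}$. Since $[m+d]_d=m$ and the index set $\{[m+j]_d:j=0,\ldots,d-1\}$ is a permutation of $\{0,\ldots,d-1\}$, the exponent equals $\mbb{i}\sum_{l=0}^{d-1}\phi_l$, which vanishes by the constraint $\sum_{l=0}^{d-1}\phi_l=0$ attached to \eqref{operatorT_SM}. Therefore $\overline{T}^d\ket{m}=\ket{m}$ for all $m$, i.e. $\overline{T}^d=\mbb{1}$, and consequently $T^d=\mbb{1}$.

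I do not expect a genuine obstacle here: the single nontrivial move is passing to $\overline{T}$, where $T$ becomes a weighted shift whose $d$-th power is controlled entirely by the product $\prod_l e^{\mbb{i}\phi_l}$, whereas working directly with $T=\sum_k\alpha_k ZX^k$ in the computational basis would spread every basis vector over the whole space and force the heavier matrix-element bookkeeping of Proposition~\ref{prop_expectation}. The only point needing care is that all arithmetic on the subscripts of the $\phi_l$ is taken modulo $d$; as a consistency check, the $j=d$ case of Proposition~\ref{prop_expectation}, namely $\bra{0}T^d\ket{0}_{\mc{E}_i}=1$, is recovered as a single diagonal entry of the stronger identity $T^d=\mbb{1}$ proved here.
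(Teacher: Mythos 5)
Your proof is correct and takes essentially the same route as the paper: pass to $\overline{T}$ via the Fourier conjugation and conclude from the constraint $\sum_{l}\phi_l=0$. The only difference is organizational --- you first establish the phased-shift action $\overline{T}\ket{m}=e^{\mbb{i}\phi_{[m]_d}}\ket{[m+1]_d}$ and iterate it $d$ times, which streamlines the paper's expansion of $\overline{T}^d$ as a product of diagonal operators via identity \eqref{identity_delta}, but the key step and conclusion are identical.
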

\begin{proof}
Analogously to the proof of Proposition \ref{prop_unitary}, we can first prove that $\overline{T}^d=\mbb{1}$. Indeed, $\overline{T}^d=\mbb{1}$ \textit{iff} $T^d=\mbb{1}$.

Analogously to the previous proof, we can take the power of \eqref{overlineT} by doing
\be
   \overline{T}^j =\left(\sum_{k=0}\alpha_k XZ^{-k} \right)^j = X^j\prod_{l=0}^{j-1}\left(\sum_{k_l=0}^{d-1}\alpha_{k_l}\omega^{-lk_l}Z^{k_l} \right).
\ee
If we apply $j=d$ to the above equation and express $Z$ explicitly, we obtain
\be
\overline{T}^d=\prod_{l=0}^{d-1}\sum_{k_l=0}^{d-1}\dfrac{1}{d}\sum_{q=0}^{d-1}\sum_{m=0}^{d-1}\omega^{k_l(q+m-l)}e^{\mbb{i}\phi_q}\ket{m}\bra{m}.
\ee
From identity \eqref{identity_delta}, we have
\begin{align}
\overline{T}^d&=\prod_{l=0}^{d-1}\sum_{m=0}^{d-1}e^{\mbb{i}\phi_{l-m}}\ket{m}\bra{m}\nonumber\\
&=\sum_{m=0}^{d-1}\exp\left(\mbb{i}\sum_{l=0}^{d-1}\phi_{[l-m]_d}\right)\ket{m}\bra{m}=\sum_{m=0}^{d-1}\ket{m}\bra{m}=\mbb{1},
\end{align}
where we have used the fact that $\sum_{l=0}^{d-1}\phi_{l}=0$.

\end{proof}

\section{Unitary evolution and partial traces}\label{U_and_Tr}

Let us prove the following results.
\begin{prop}
Let $\rho_{\mc{SE}_1}\coloneqq \rho_{\mc{AB}}\otimes\ket{0}\bra{0}_{\mc{E}_1}$ be the initial state of the system+environment, $U_{\mc{SE}_i}$ be the noisy CNOT gate defined as
\be\label{U_SM}
U_{\mc{SE}_i}\coloneqq\sum_{j\, =\, 0}^{d-1}P_j\otimes T^j,
\ee
where $P_j=A_j\otimes \mbb{1}_\mc{B}$ is the projector acting on subspace $\mc{H}_{\mc{S}}$ and $T$ is the unitary observable given by \eqref{operatorT_SM}. Also, let $\mc{M}_A^{\epsilon}$ be the monitoring map \eqref{monitoring_SM} and the evolved state be $\rho_{\mc{SE}_1}'=U_{\mc{SE}_1}\rho_{\mc{SE}_1} U_{\mc{SE}_1}^\dagger$. If the unitary $U_{\mc{SE}_i}$ must result in the monitoring map, i.e.
\be
\Tr_{\mc{E}_1}\left(\rho_{\mc{SE}_1}'\right)=\mc{M}_A^{\epsilon=1-\eta}(\rho_\mc{AB}),
\ee
then 
\be
\eta=\dfrac{1}{d}\sum_{q=0}^{d-1}\cos\phi_{q},
\ee
such that
\begin{subequations}
    \begin{align}
        &\eta\in[0,1],\\
        &\sum_{q=0}^{d-1}\sin\left(\sum_{m=0}^{p}\phi_{[q+m]_d} \right)=0,\\
        &\sum_{q=0}^{d-1}\cos\left(\sum_{m=0}^{p}\phi_{[q+m]_d} \right)=\sum_{q=0}^{d-1}\cos\left(\sum_{m=0}^{p'}\phi_{[q+m]_d} \right),\quad\forall p\neq p'.
    \end{align}
\end{subequations}
for every $p,p'$ satisfying $0\leqslant p<p'\leqslant d-2$.
\end{prop}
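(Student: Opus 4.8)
The plan is to compute $\Tr_{\mc{E}_1}\!\big(\rho'_{\mc{SE}_1}\big)$ in closed form, re-express it in the eigenbasis of $A$, and match it block-by-block against the monitoring map. First I would expand $\rho'_{\mc{SE}_1}=U_{\mc{SE}_1}\big(\rho_\mc{AB}\otimes\ket{0}\bra{0}_{\mc{E}_1}\big)U_{\mc{SE}_1}^\dagger$ with \eqref{U_SM}, getting $\sum_{i,j=0}^{d-1}P_i\rho_\mc{AB}P_j\otimes T^i\ket{0}\bra{0}_{\mc{E}_1}(T^j)^\dagger$; tracing out $\mc{E}_1$ then leaves $\sum_{i,j}P_i\rho_\mc{AB}P_j\,\bra{0}(T^j)^\dagger T^i\ket{0}_{\mc{E}_1}$.

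Second, I would simplify the scalar overlap. Since $T$ is unitary (Proposition~\ref{prop_unitary}) and $T^d=\mbb{1}$, one has $(T^j)^\dagger T^i=T^{i-j}=T^{[i-j]_d}$, so Proposition~\ref{prop_expectation} evaluates the overlap to $c_{[i-j]_d}$ with $c_p\coloneqq \tfrac1d\sum_{q=0}^{d-1}\exp\!\big(\mbb{i}\sum_{m=0}^{p-1}\phi_{[q+m]_d}\big)$ and $c_0=1$. In parallel I would put the target into block form: using $\sum_iP_i=\mbb{1}$ and $\Phi_A(\rho_\mc{AB})=\sum_iP_i\rho_\mc{AB}P_i$, the monitoring map reads $\mc{M}_A^{1-\eta}(\rho_\mc{AB})=\eta\,\rho_\mc{AB}+(1-\eta)\Phi_A(\rho_\mc{AB})=\sum_{i,j}P_i\rho_\mc{AB}P_j\big[\eta+(1-\eta)\delta_{i,j}\big]$.

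Third, the two operator expressions must coincide; requiring this for every $\rho_\mc{AB}$ (it must hold in particular for a generic $\rho_\mc{AB}$, whose blocks $P_i\rho_\mc{AB}P_j$ are all nonzero and mutually Hilbert--Schmidt orthogonal, so the maps $\rho_\mc{AB}\mapsto P_i\rho_\mc{AB}P_j$ are linearly independent) lets me match coefficients sector by sector. The diagonal sectors give $c_0=\eta+(1-\eta)=1$, which holds automatically; the off-diagonal sectors force $c_p=\eta$ for all $p\in\{1,\ldots,d-1\}$. Writing these as $c_{p+1}=\eta$ for $p\in\{0,\ldots,d-2\}$ and splitting into real and imaginary parts, the vanishing of the imaginary part is precisely condition~(b), the equality of the real parts across different $p$ is condition~(c), and the real part at $p=0$ gives $\eta=\tfrac1d\sum_{q=0}^{d-1}\cos\phi_q$. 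Finally, since $\mc{M}_A^\epsilon$ is by definition an interpolation with $\epsilon\in[0,1]$, the hypothesis forces $\eta=1-\epsilon\in[0,1]$; here $\eta\le1$ is automatic from the last formula, so $\eta\ge0$ is the effective constraint (without it one could land on a channel outside the monitoring family, e.g.\ a unitary rotation of $\rho_\mc{AB}$).

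I expect the only delicate point to be the coefficient-matching step, where one must be explicit that the maps $\rho_\mc{AB}\mapsto P_i\rho_\mc{AB}P_j$ are linearly independent --- equivalently, pick $\rho_\mc{AB}$ generic so each block survives in its own operator sector --- so that the operator identity descends to the scalars $c_p=\eta$. The remainder is bookkeeping with identity~\eqref{identity_delta} and the properties of $T$ established in Appendix~\ref{properties}.
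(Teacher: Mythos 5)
Your proposal is correct and follows essentially the same route as the paper's own proof: expand $\rho'_{\mc{SE}_1}$, trace out $\mc{E}_1$, write $\mc{M}_A^{1-\eta}$ in block form $\sum_{i,j}P_i\rho_\mc{AB}P_j[\eta+(1-\eta)\delta_{i,j}]$, and use Proposition~\ref{prop_expectation} to turn the overlap condition $\bra{0}(T^j)^\dagger T^i\ket{0}=\eta+(1-\eta)\delta_{i,j}$ into the sine/cosine constraints and $\eta=\tfrac1d\sum_q\cos\phi_q$. Your explicit appeal to the linear independence of the blocks $P_i\rho_\mc{AB}P_j$ (and to $T^d=\mbb{1}$ for reducing $(T^j)^\dagger T^i$ to $T^{[i-j]_d}$) only makes explicit steps the paper leaves implicit.
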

\begin{proof}
From Eq. \eqref{U_SM}, we can write
\be
\rho_{\mc{S}\mc{E}_1}'=\sum_{i,j\, =\, 0}^{d-1} P_i\rho_{\mc{AB}} P_j\otimes T^i\ket{0}\bra{0}_{\mc{E}_1} (T^j)^\dagger.
\ee
By tracing out the fragment $\mc{E}_1$ of the environment (which, in this case, comprises the whole environment), we obtain
\be
    \Tr_{\mc{E}_1} \left(\rho_{\mc{S}\mc{E}_1}'\right) =\sum_{i,j\, =\, 0}^{d-1}P_i\rho_\mc{AB} P_j \bra{0}(T^j)^\dagger T^i\ket{0}_{\mc{E}_1}.
\ee
A closer look at the monitoring map \eqref{monitoring_SM} allows us to write it in the following way
\be
\mc{M}_A^{\epsilon=1-\eta}=\sum_{i,j=0}^{d-1}P_{i}\rho_\mc{AB} P_{j} \left[\eta+(1-\eta)\delta_{i,j} \right],
\ee
where $\eta\in[0,1]$ acts as the noise of the monitoring, i.e., the complement of the monitoring intensity. Because we require that 
\be\label{traceE1_appendix}
\Tr_{\mc{E}_1}\rho_{\mc{SE}_1}'=\mc{M}_A^{\epsilon=1-\eta}(\rho_\mc{AB}),
\ee
then we must have
\be\label{braket=eta}
\bra{0}(T^j)^\dagger T^i\ket{0}_{\mc{E}_1}=\eta+(1-\eta)\delta_{i,j}.
\ee
The above constraint implies that the braket on its l.h.s. must be a real non-negative number. Therefore, from Proposition \ref{prop_expectation}, the set of phases $\{\phi_j\}_{j=0}^{d-1}$ must satisfy
\be
\sum_{q=0}^{d-1}\sin\left(\sum_{m=0}^{p}\phi_{[q+m]_d} \right)=0,\quad\forall p\in\{0,\ldots,d-2\},
\ee
to make $\bra{0}(T^j)^\dagger T^i\ket{0}_{\mc{E}_1}$ a real number. In addition, if $[i-j]_d\neq 0$, then the braket in \eqref{braket=eta} must always result in the same number $\eta$, independently of $i$ and $j$. Thus, from Proposition \ref{prop_expectation}, we also must have that
\be
\sum_{q=0}^{d-1}\cos\left(\sum_{m=0}^{p}\phi_{[q+m]_d} \right)=\sum_{q=0}^{d-1}\cos\left(\sum_{m=0}^{p'}\phi_{[q+m]_d} \right),\quad\forall p\neq p',
\ee
which implies that
\be
\eta=\dfrac{1}{d}\sum_{q=0}^{d-1}\cos\phi_{q},
\ee
where the phases must be s.t. $\sum_{q=0}^{d-1}\phi_q=0$ and $\eta\in[0,1]$.

\end{proof}
Now, let us see what happens when we make the system interact with $n$ environmental qudits.
\begin{prop}
    Let $\Omega=\rho_{\mc{AB}}\otimes\bigotimes_{k=1}^{n}\ket{0}\bra{0}_{\mc{E}_k}$ be the initial state of the system+environment,  $U_{\mc{SE}_i}$ be the noisy CNOT gate defined in Eq. \eqref{U_SM}, $U\coloneqq U_{\mc{SE}_n}\ldots U_{\mc{SE}_1}$ be the global unitary, and $\mc{M}_A^{\epsilon}$ be the monitoring map defined in Eq. \eqref{monitoring_SM}. If $\Omega'=U\Omega_0 U^\dagger$, then
    \be
    \Tr_{\mc{E}}\left(\Omega'\right)=\mc{M}_A^{\epsilon=1-\eta^n}(\rho_\mc{AB}).
    \ee
\end{prop}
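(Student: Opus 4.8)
The plan is to exploit the orthogonality of the system projectors $P_j$ to collapse the product $U=U_{\mc{SE}_n}\cdots U_{\mc{SE}_1}$ into a single sum over $j$, and then reduce the partial trace to an $n$-th power of the single-qudit overlap $\bra{0}(T^j)^\dagger T^i\ket{0}$ already evaluated in the previous proposition.

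First I would expand $U$. Each factor $U_{\mc{SE}_i}=\sum_{j}P_j\otimes T^j$ acts with the \emph{same} projectors $P_j$ on $\mc{H_S}$ but with $T^j$ on the distinct factor $\mc{H}_{\mc{E}_i}$; multiplying the $n$ factors and using $P_iP_j=\delta_{i,j}P_j$ kills every cross term, leaving
\be
U=\sum_{j\,=\,0}^{d-1}P_j\otimes T^j_{\mc{E}_1}\otimes\cdots\otimes T^j_{\mc{E}_n}.
\ee
The ordering of the $U_{\mc{SE}_i}$ is immaterial here because the operators on different environmental factors commute and the system parts are mutually orthogonal. Conjugating $\Omega$ then yields
\be
\Omega'=\sum_{i,j\,=\,0}^{d-1}P_i\,\rho_{\mc{AB}}\,P_j\otimes\bigotimes_{k=1}^{n}T^i\ket{0}\bra{0}_{\mc{E}_k}(T^j)^\dagger,
\ee
which is exactly the displayed expression for $\Omega'$ in the main text.

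Next I would trace out $\mc{E}=\bigotimes_k\mc{H}_{\mc{E}_k}$. Because the environmental part factorizes over $k$ and every factor starts in $\ket{0}$, the trace produces $\prod_{k=1}^{n}\bra{0}(T^j)^\dagger T^i\ket{0}_{\mc{E}_k}=\big(\bra{0}(T^j)^\dagger T^i\ket{0}\big)^n$. By Eq.~\eqref{braket=eta} this overlap equals $\eta+(1-\eta)\delta_{i,j}$, a real number, so its $n$-th power is $1$ when $i=j$ and $\eta^n$ otherwise, i.e.\ $\eta^n+(1-\eta^n)\delta_{i,j}$. Hence
\be
\Tr_{\mc{E}}\Omega'=\sum_{i,j\,=\,0}^{d-1}P_i\,\rho_{\mc{AB}}\,P_j\big[\eta^n+(1-\eta^n)\delta_{i,j}\big]=\mc{M}_A^{\epsilon=1-\eta^n}(\rho_\mc{AB}),
\ee
using the rewriting of the monitoring map from the previous proposition with $\eta\mapsto\eta^n$. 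Equivalently, one could combine the single-qudit identity $\Tr_{\mc{E}_1}\rho'_{\mc{SE}_1}=\mc{M}_A^{1-\eta}(\rho_\mc{AB})$ with the semigroup relation~\eqref{recursion} to reach the same conclusion.

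The only step that needs a little care --- the main obstacle, such as it is --- is the collapse of the product of the $n$ controlled unitaries to a single sum over $j$ rather than a sum over $n$-tuples $(j_1,\dots,j_n)$; this is precisely where $P_iP_j=\delta_{i,j}P_j$ does all the work, and one should check that the orthogonality survives the non-trivial ordering in $U_{\mc{SE}_n}\cdots U_{\mc{SE}_1}$. After that the argument is a direct bookkeeping exercise, with the one remark worth making explicit that $\bra{0}(T^j)^\dagger T^i\ket{0}$ is real (guaranteed by the phase constraints of System~\eqref{system}), which is what makes raising it to the $n$-th power behave as claimed.
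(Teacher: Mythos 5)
Your proof is correct and follows essentially the same route as the paper's: both collapse the product of controlled unitaries via the projector orthogonality $P_iP_j=\delta_{i,j}P_j$, reduce the environmental trace to the $n$-th power of $\bra{0}(T^j)^\dagger T^i\ket{0}=\eta+(1-\eta)\delta_{i,j}$, and identify the result with $\mc{M}_A^{1-\eta^n}$. The only cosmetic difference is that you simplify $U$ into a single sum before conjugating, whereas the paper applies the unitaries to the state one at a time; the content is the same.
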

\begin{proof}
    First, let us use definition \eqref{U_SM} and rewrite the evolved global state in the following way:
    \begin{widetext}
    \begin{align}
    \Omega' &=U_{\mc{SE}_n}\ldots U_{\mc{SE}_1} \left(\rho_{\mc{AB}} \bigotimes_{k=1}^{n} \ket{0}\bra{0}_{\mc{E}_k}\right) U_{\mc{SE}_1}^\dagger\ldots U_{\mc{SE}_n}^\dagger \\
    & = U_{\mc{SE}_n}\ldots U_{\mc{SE}_2}\left[ \sum_{i,j\, =\, 0}^{d-1}\left(P_i\rho_{\mc{AB}}P_j\otimes T^i_{\mc{E}_1}\ket{0}\bra{0}_{\mc{E}_1}(T^j_{\mc{E}_1})^\dagger \right) \bigotimes_{k=2}^{n} \ket{0}\bra{0}_{\mc{E}_k}\right] U_{\mc{SE}_2}^\dagger\ldots U_{\mc{SE}_n}^\dagger.
    \end{align}
    Note that, in the above, we explicitly specified the space where each operator $T$ is going to act. Since $P_i$ are projectors, we can do the following:
    \begin{align}
           \Omega' & = U_{\mc{SE}_n}\ldots U_{\mc{SE}_3}\left[ \sum_{i,j,i',j'\, =\, 0}^{d-1}\left(P_{i'}P_i\rho_{\mc{AB}}P_jP_{j'}\otimes T^i_{\mc{E}_1}\ket{0}\bra{0}_{\mc{E}_1}(T^j_{\mc{E}_1})^\dagger\otimes T^{i'}_{\mc{E}_2}\ket{0}\bra{0}_{\mc{E}_2}(T^{j'}_{\mc{E}_2})^\dagger\right) \bigotimes_{k=3}^{n} \ket{0}\bra{0}_{\mc{E}_k}\right] U_{\mc{SE}_3}^\dagger\ldots U_{\mc{SE}_n}^\dagger \\ 
           & = U_{\mc{SE}_n}\ldots U_{\mc{SE}_3}\left[ \sum_{i,j\, =\, 0}^{d-1}\left(P_i\rho_{\mc{AB}}P_j\otimes T^i_{\mc{E}_1}\ket{0}\bra{0}_{\mc{E}_1}(T^j_{\mc{E}_1})^\dagger\otimes T^{i}_{\mc{E}_2}\ket{0}\bra{0}_{\mc{E}_2}(T^{j}_{\mc{E}_2})^\dagger \right) \bigotimes_{k=3}^{n} \ket{0}\bra{0}_{\mc{E}_k}\right] U_{\mc{SE}_3}^\dagger\ldots U_{\mc{SE}_n}^\dagger.
    \end{align}
    \end{widetext}
    By applying this procedure to all the unitaries, we can obtain
    \be
    \Omega'=\sum_{i,j\, =\, 0}^{d-1}\left(P_i\rho_{\mc{AB}}P_j\otimes\bigotimes_{k=1}^{n}T^i_{\mc{E}_k}\ket{0}\bra{0}_{\mc{E}_k}(T^j_{\mc{E}_k})^\dagger\right).
    \ee
    The above equation is the global (possibly entangled) state that represents the situation found at the end of the experiment depicted in Fig. \ref{fig1}. By tracing out all the $n$ environmental qudits $\mc{E}$, we can use Eq. \eqref{braket=eta} to obtain
    \be
    \Tr_{\mc{E}}\Omega' = \sum_{i,j\, =\, 0}^{d-1} P_i \rho_{\mc{AB}} P_j \left[\eta+\delta_{i,j}(1-\eta) \right]^n.
    \ee
    Now, let us separate the above sum into two parts:
    \be
    \Tr_\mc{E}\Omega' =\sum_{i\, =\, 0}^{d-1} P_i \rho_\mc{AB} P_i +\eta^n\sum_{\substack{i,j\, =\, 0\\ i\neq j}}^{d-1} P_i \rho_\mc{AB} P_j.
    \ee
    Now, let us sum and subtract the term $\eta^n \sum_{i\, =\, 0}^{d-1} P_i \rho_\mc{AB} P_i$ to write the expression as a combination of $\Phi_A$ maps
    \begin{align}
    \Tr_\mc{E}\Omega' &= \sum_{i\, =\, 0}^{d-1} P_i \rho_\mc{AB} P_i +\eta^n \sum_{i\, =\, 0}^{d-1} P_i \rho_\mc{AB} P_i -\eta^n \sum_{i\, =\, 0}^{d-1} P_i \rho_{\mc{AB}} P_i\nonumber\\
    &+\eta^n\sum_{\substack{i,j\, =\, 0  \\ i\neq j}}^{d-1} P_i \rho_\mc{AB} P_j\\
    &= (1-\eta^n)\sum_{i\, =\, 0}^{d-1} P_i \rho_\mc{AB} P_i+\eta^n \sum_{i,j=0}^{d-1}P_i \rho_\mc{AB} P_j  \\
    &= (1-\eta^n)\Phi_A(\rho_\mc{AB})+\eta^n\rho_\mc{AB}.\label{1-eta^n}
    \end{align}
    From the definition of a monitoring map \eqref{monitoring_SM}, we see that the above expression is indeed a monitoring of intensity $\epsilon=1-\eta^n$,
    \be
        \Tr_{\mc{E}}\left(\Omega'\right)=\mc{M}_A^{\epsilon=1-\eta^n}(\rho_\mc{AB}).
    \ee

\end{proof}

\end{document}